\newtheorem{theorem}{Theorem}	
\newtheorem{proposition}{Proposition}
\newtheorem{corollary}{Corollary}
\newtheorem{conjecture}{Conjecture}
\theoremstyle{definition}
\newtheorem{definition}{Definition}
\theoremstyle{remark}
\newtheorem{example}{Example}
\DeclareMathOperator{\tr}{tr}
\DeclareMathOperator{\wt}{wt}
\DeclareMathOperator{\supp}{supp}
\newcommand{\C}{\mathbb{C}}
\newcommand{\DD}{\mathcal{D}}
\newcommand{\CC}{\mathcal{C}}
\newcommand{\GG}{\mathcal{G}}
\newcommand{\on}[1]{\operatorname{#1}}
\newcommand{\ot}[0]{\otimes}
\newcommand{\nn}[0]{\nonumber}
\newcommand{\nhf}[0]{\lfloor n/2 \rfloor}
\renewcommand{\P}[0]{\mathcal{P}}
\newcommand{\Pe}[0]{P_{\rm{e}}}
\newcommand{\Po}[0]{P_{\rm{o}}}
\newcommand{\Ae}[0]{A_{\rm{e}}}
\newcommand{\Ao}[0]{A_{\rm{o}}}
\newcommand{\one}[0]{\mathds{1}}
\newcommand{\ket}[1]{|#1\rangle}
\newcommand{\braket}[2]{\langle #1|#2\rangle}
\newcommand{\ketbra}[2]{| #1\rangle \langle #2|}
\newcommand{\dyad}[1]{| #1\rangle \langle #1|}
\newcommand{\plus}{\mbox{$+$}}  			   
\renewcommand{\r}{\varrho}
\newcommand{\GHZ}{\text{GHZ}}
\begin{document}

\title		{Some Ulam's reconstruction problems for quantum states}
\date	{\today}

\author{Felix Huber\thanks{ICFO - The Institute of Photonic Sciences, Barcelona, Spain; Theoretical Quantum Optics, University of Siegen, Germany. Email: felix.huber@icfo.eu} \and Simone Severini\thanks{Department of Computer Science, University College London, UK; Institute of Natural Sciences, Shanghai Jiao Tong University, China}}
\maketitle

\begin{abstract}
  Provided a complete set of putative $k$-body reductions of a multipartite quantum state, can one determine if a joint state exists? 
We derive necessary conditions for this to be true. In contrast to what is known as the quantum marginal problem, we consider a setting where the labeling of the subsystems is unknown. The problem can be seen in analogy to Ulam's reconstruction conjecture in graph theory. The conjecture -- still unsolved -- claims that every graph on at least three vertices can uniquely be reconstructed from the set of its vertex-deleted subgraphs. When considering quantum states, we demonstrate that the non-existence of joint states can, in some cases, already be inferred from a set of marginals having the size of just more than half of the parties. We apply these methods to graph states, where many constraints can be evaluated by knowing the number of stabilizer elements of certain weights that appear in the reductions. This perspective links with constraints that were derived in the context of quantum error-correcting codes and polynomial invariants. Some of these constraints can be interpreted as monogamy-like relations that limit the correlations arising from quantum states. Lastly, we provide an answer to Ulam's reconstruction problem for generic quantum states.
\end{abstract}


\section{Introduction}
The relationship of the whole to its parts lies at the heart of the theory of quantum entanglement. 
A pure quantum state is said to be \emph{entangled} if it can not be written as the tensor product of its reductions. A particularly intriguing and important consequence of this mathematical definition is, that given a set of quantum marginals, it is not clear from the outset if and how they can be assembled into a pure joint state. Understanding this problem is not only important in the theory of entanglement, but also for applications in solid-state physics and quantum chemistry, such as calculating the energies of ground states~\cite{RevModPhys.35.668, NatResCounc1995}.

These kind of reconstruction questions have a long tradition in the mathematics literature. From our side, a particularly interesting context is the 1960's {\em Ulam graph reconstruction conjecture}~\cite{AIGNER19943, 10.2307/2316851, JGT:JGT3190010306, Bondy1978}. Indeed, since graphs as well as quantum states can both be represented by positive semi-definite matrices -- the Laplacian and the density matrix respectively -- this highlights a common theme. Moreover, it may be valuable to notice that in the quantum mechanical setting, relational information is associated with correlations between subsystems. 
The family of graphs states allows to directly encode such relational information in pure quantum states. Using approaches from quantum mechanics, this may provide new insights into certain aspects of graph theory.

Informally, the Ulam reconstruction conjecture is as follows: given a complete set of vertex-deleted subgraphs, is the original graph (without vertex labels) the only possible joint graph? 
Over the last decade a substantial amount of research focused on this problem. Widely believed to be true, this remains one of the outstanding unsolved questions in graph theory. 

In this work, we start by providing background to the so-called {\em quantum marginal problem} (QMP), which asks analogous questions for the case that the labels of the individual subsystems are known~\cite{RevModPhys.35.668, quant-ph/0409113v1, Schilling2014_Thesis}. This is in contrast to the question which we will investigate later, namely a setting in which the labels are unknown to us.
Originally coined the {\em $N$-representability problem} by Coleman, its first formulation
asks how to recognize when a putative two-party reduced density matrix is in fact 
the reduction of an $N$-particle system of indistinguishable Fermions~\cite{RevModPhys.35.668}. 
In fact, the $N$-representability problem has been highlighted as one of the most
prominent research challenges in theoretical and computational chemistry~\cite{NatResCounc1995}.
This question was subsequently expanded to the case of distinguishable particles, 
and in particular, to qubits.
In the case of the marginals being disjoint, the conditions for the existence of a 
joint state have been completely characterized:
considering the existence of a pure joint qubit state, the characterization is given by the 
so-called Polygon inequalities, which constrain the spectra of pure state reductions~\cite{PhysRevLett.90.107902}. 
Constraints for the existence of a mixed joint state on two qubits 
have been subsequently been obtained by Bravyi~\cite{journals/qic/Bravyi04}. Solving the QMP in case of 
disjoint marginals completely, Klyachko extended the spectral conditions to the existence 
to a mixed joint state on \(n\) parties of arbitrary local dimensions~\cite{quant-ph/0409113v1}.

The QMP problem in the case of overlapping marginals has turned out to be an even harder problem.
Only few necessary conditions for the general case are known~\cite{Butterley2006, 0806.2962v1, doi:10.1063/1.4808218, PhysRevA.93.032105}, 
of which many are based on entropic inequalities such as the strong subadditivity.
Other constraints are posed by monogamy (in)equalities~\cite{PhysRevA.61.052306, PhysRevLett.96.220503, PhysRevLett.114.140402}, 
some of which will also be used in our work.
Interestingly, the special case of the symmetric extension of two qubits, 
where a two-party density matrix \(\r_{AB}\) is extended to a tripartite state \(\r_{ABB'}\) 
with \(\r_{AB} = \r_{AB'}\), has completely been characterized~\cite{PhysRevA.90.032318}. 
Despite many efforts, a general necessary and sufficient condition for the QMP with
overlapping marginals is still lacking.

A question related to the QMP are the conditions for the uniqueness of the joint state given its marginals. 
This is motivated by a naturally arising physical question:
considering a Hamiltonian with local interactions only, its groundstate is non-degenerate only if no other states with the same local reductions exist.
In this context, Linden et al. showed that almost every pure state of three qubits is completely determined by its two-particle 
reduced density matrices~\cite{PhysRevLett.89.207901, PhysRevA.70.010302}. This result has been subsequently been expanded to 
systems of \(n\) qudits, where having access to a certain subset of all marginals of size \(\lceil n/2 \rceil +1\) 
is almost always sufficient to uniquely specify a joint pure state~\cite{PhysRevA.71.012324}. 
Furthermore, almost all pure states are uniquely determined (amongst pure states) by a set
of three carefully chosen marginals of size \((n-2)\)~\cite{PhysRevA.96.010102}. 
Finally, it is useful to remark that, while the QMP can in principle be stated as a semidefinite program~\cite{Fukuda2007},
its formulation scales exponentially in system size. 
In fact, the QMP has been shown to be QMA-complete~\cite{Liu2006}.

In contrast to previous work on the QMP, we consider in this work only {\em unlabeled} marginals, 
that is, marginals whose corresponding subsystems are unknown to us.  
Thus, one is free to arrange them as necessary in order to obtain a joint state.
Should the reductions to one party be all different (e.g. when considering reductions of random states), 
such labels can naturally be restored by comparing the one-body reductions. 
However, we are here mainly considering a special type of quantum states called graph states. 
These have proven to be useful for certain tasks in quantum information such as 
quantum error correction~\cite{PhysRevA.65.012308, 1023317} and 
measurement-based quantum computation~\cite{PhysRevLett.86.5188, PhysRevA.68.022312,PhysRevA.69.062311}.
The few-body reductions of this type of states are typically maximally mixed, so the strategy of comparing one-body reductions 
does not find an immediate application.
Thus the quantum marginal problem amounts to a kind of jigsaw puzzle: 
we are given overlapping parts, the task being to determine whether or not they indeed can be assembled to one or many different puzzles.

Here, we address similar questions in the case of unlabeled marginals 
and derive necessary constraints for the Ulam reconstruction problem for quantum graph states. 
These are based solely on the number of so-called stabilizer elements present in the complete set of 
reductions having a given size. Our results connect with constraints that were derived in the context 
of quantum error-correcting codes that involve polynomial invariants. These can be interpreted as 
monogamy-like relations that limit the correlations that can arise from quantum states.

The remainder of this paper is organized as follows. In Sec.~\ref{eq:class_Ulam}, we introduce the classical Ulam conjecture.
In Secs.~\ref{sect:setup} and~\ref{sect:graph_states}, we introduce the basic notions of many-qubit systems and graph states that will be useful in our context 
and formulate quantum analogues of Ulam type reconstruction problems.
The main tool of this paper, the so-called weight distribution, is introduced in Sec.~\ref{sect:weight_distribution}.
We derive constraints on the weight distribution in Sec.~\ref{sect:weight_distr_constraints}. These are then applied in Sec.~\ref{sec:detect_illegitimate_decks} to state legitimacy conditions on marginals to originate from a putative joint state. We conclude and provide an outlook in Sec.~\ref{sect:conclusion}.


\section{Deck legitimacy and graph reconstruction}\label{eq:class_Ulam}
Consider a simple graph \(G = (V,E)\) on~\(n\) vertices. 
Denote by~\(N(i)\) the neighborhood of vertex~\(i\), 
that is, the vertices adjacent to~\(i\). 
By deleting a single vertex \(j \in V\) and deleting each edge~\(e\) incident with~\(j\), 
one obtains the vertex-deleted subgraph \(G_j = (V\backslash \{j\}, E \backslash \{e\} \,|\, j \in e)\) 
on \((n-1)\) vertices. 
By forming all vertex-deleted subgraphs \(G_j\) induced by~\(G\), termed \emph{cards},
we obtain its unordered \emph{deck}, the multi-set \(\DD_G = \{G_1, \dots, G_n\}\).
We denote by \(\binom{G}{F}\) the number of times a copy of graph \(F\) appears as a subgraph of \(G\).

Concerning graphs and their decks, we present three closely related problems that originate in a conjecture by Ulam, 
the so-called {\em reconstruction conjecture}. 
A wealth of results concern themselves with the reconstruction conjecture and related questions; we refer to Ref.~\cite{lauri_scapellato_2016}, to Sec.~$2.3$ in Ref.~\cite{Gross:2013:HGT:2613412}, and to Sec.~$2.7$ in Ref.~\cite{BondyMurty2008} for a full account.

\vspace{0.8em}
\noindent{\bf A) Legitimate deck problem.} 
Let a putative deck be given that contains~\(n\) cards of size \((n-1)\) each. 
Is it possible to obtain it from a graph of~\(n\) vertices? 
If this is indeed the case, the deck is called {\em legitimate}. 
The legitimate deck problem is a type of realizability problem~\cite{AIGNER19943}, 
where it has to be determined if a graph realizing a certain property exists, or in other words, if that property is {\em graphical}.

Bondy provided a necessary condition for a deck to be legitimate, called Kelly's condition. It already seems to detect a majority of illegitimate decks.

\begin{proposition}[Kelly's condition~\cite{Bondy1978, BondyMurty2008}]
\label{Kelly}
  Let \(\DD\) be a complete deck of a putative graph of \(n\) vertices. 
  Given a graph~\(F\) with a number of vertices \(|V_F|< n\), the following expression must be an integer 
  \begin{equation}\label{eq:Kelly_cond}
  	\frac{\sum_{i=1}^n \binom{G_i}{F}}{n-|V_F|}\,.
  \end{equation}
\end{proposition}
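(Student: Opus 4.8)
The plan is to establish the stronger exact identity
\[
  \sum_{i=1}^n \binom{G_i}{F} = (n - |V_F|)\binom{G}{F},
\]
from which the proposition follows immediately: the right-hand side is manifestly divisible by \((n-|V_F|)\), and the quotient \(\binom{G}{F}\) is a non-negative integer (it counts subgraph-copies of \(F\) in the realizing graph \(G\)). The entire argument is a single double count, so I would phrase it that way from the start.

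First I would fix a graph \(G\) on \(n\) vertices realizing the deck \(\DD\), and introduce the set \(S\) of pairs \((H,i)\) where \(H\subseteq G\) is a subgraph isomorphic to \(F\) and \(i\in V\) is a vertex with \(i\notin V(H)\). The goal is then to compute \(|S|\) in two independent ways. Counting by the card index \(i\): a copy of \(F\) survives in the card \(G_i\) precisely when it avoids the deleted vertex \(i\). Since \(G_i\) is the subgraph of \(G\) on \(V\setminus\{i\}\), the copies of \(F\) in \(G_i\) are exactly the copies of \(F\) in \(G\) disjoint from \(i\), of which there are \(\binom{G_i}{F}\). Summing over the \(n\) cards gives \(|S| = \sum_{i=1}^n \binom{G_i}{F}\).

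Counting instead by the copy \(H\): each of the \(\binom{G}{F}\) copies of \(F\) in \(G\) occupies exactly \(|V_F|\) vertices, so it can be paired with any one of the remaining \(n-|V_F|\) vertices. This yields \(|S| = (n-|V_F|)\binom{G}{F}\), and equating the two expressions delivers the identity above.

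The only point that needs a moment's care — and it is the closest thing to an obstacle — is verifying the correspondence underlying the first count: one must check that every subgraph of the card \(G_i\) isomorphic to \(F\) is genuinely a subgraph of \(G\) avoiding \(i\), and conversely, so that no copy of \(F\) is created or destroyed spuriously by the deletion. This is immediate, because deleting vertex \(i\) removes only \(i\) together with its incident edges; it introduces no new adjacencies and leaves intact every copy of \(F\) disjoint from \(i\). Provided \(\binom{\cdot}{\cdot}\) is interpreted consistently throughout (as counting subgraph-copies of the same kind, induced or not), the map \((H,i)\mapsto H\) is an exact bijection on each fibre, and the result is essentially a bookkeeping identity rather than a deep fact.
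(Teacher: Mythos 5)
Your proposal is correct and follows essentially the same argument as the paper: the paper's one-line proof observes that each copy of \(F\) in \(G\) appears in exactly \(n-|V_F|\) cards, which is precisely your double count of the pairs \((H,i)\) written out in full. Your version merely makes explicit the identity \(\sum_i \binom{G_i}{F} = (n-|V_F|)\binom{G}{F}\) and the bijection check that the paper leaves implicit.
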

\begin{proof}
 Each instance of a copy of \(F\) in \(G\) occurs exactly \(|n| - |V_F|\) times in the deck \(\DD_G\). Thus \(\sum_{i=1}^n \binom{G_i}{F} / (n-|V_F|)\) must be integer.
\end{proof}

Thus whenever Eq.~\eqref{eq:Kelly_cond} is not an integer, the deck is illegitimate and can not be realized by a graph.

Despite the innocuous looking formulation, 
the legitimate deck problem is at least as hard as that of graph isomorphism~\cite{mansfield82, Harary82}. 
Not much more seems to be known about the legitimate deck problem, apart from the fact  
that many counting arguments used to approach the reconstruction conjecture (as the one above) 
can be used to derive natural legitimacy conditions~\cite{BondyMurty2008}.

\vspace{0.8em}
\noindent{\bf B) Reconstruction problem (uniqueness).}
Given a legitimate deck \(\DD_G\), is there, up to graph isomorphism, a unique graph corresponding to it? In other words, is it {\em reconstructible}?
The Ulam graph reconstruction conjecture states that this must indeed be the case for all graphs on at least three vertices.
\begin{conjecture}[Ulam~\cite{Ulam1960,kelly1957,Harary1964}]
    Considering graphs with at least three vertices, \(\DD_G = \DD_H\) if and only if~\(G\) is isomorphic to~\(H\).
\end{conjecture}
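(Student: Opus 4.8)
The forward direction is immediate and I would dispose of it first: if $G$ is isomorphic to $H$ via a vertex bijection $\phi$, then $\phi$ restricts to isomorphisms $G_i \cong H_{\phi(i)}$ for every $i$, so the two decks coincide as multi-sets. All the content lies in the converse, which is the reconstruction conjecture proper. I would not expect to settle it in full generality — only to reduce it to the known counting machinery and isolate the point where that machinery stops.

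For the converse, the plan is to extract as many isomorphism invariants as possible from the deck via the counting identity already established in Proposition~\ref{Kelly}. That identity shows not merely that the expression in Eq.~\eqref{eq:Kelly_cond} is an integer, but that it equals $\binom{G}{F}$: every copy of $F$ in $G$ is counted $n-|V_F|$ times across the cards. Hence for every $F$ with $|V_F|<n$ the count $\binom{G}{F}$ is a function of the deck alone. Specializing $F$ recovers the number of edges (take $F$ a single edge), the degree sequence, the number of copies of each small graph, and — through inclusion–exclusion — whether $G$ is connected together with the multi-set of its component orders. First I would verify that $\DD_G = \DD_H$ forces all these invariants of $G$ and $H$ to agree.

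The next step is to upgrade agreement of invariants into an actual isomorphism on the structured subclasses where this is known to succeed. For disconnected graphs one uses Kelly's argument: the proper-subgraph statistics suffice to recover the components, so $\DD_G = \DD_H$ forces $G \cong H$. Analogous reconstructions go through for trees, regular graphs, and several planar families, each leveraging the reconstructibility of a well-chosen family of counts $\binom{G}{F}$ together with structural rigidity. I would assemble these as the cases in which the converse is provably true.

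The main obstacle — and the reason the statement is a conjecture rather than a theorem — is the general connected case. The counting identity of Proposition~\ref{Kelly} is silent precisely at the top level $|V_F| = n$, where the denominator $n-|V_F|$ vanishes. The deck therefore fixes every proper-subgraph statistic of $G$, yet the passage from the full vector of counts $\bigl(\binom{G}{F}\bigr)_{|V_F|<n}$ to the global isomorphism type of $G$ is not forced by any known argument for arbitrary graphs. Supplying that missing step is exactly where an honest attempt at a complete proof stalls; in its absence one can only record that the conjecture has been confirmed computationally for all graphs up to eleven vertices and for the structured classes above.
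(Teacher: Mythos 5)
The paper offers no proof of this statement: it is Ulam's reconstruction conjecture, stated as an open problem, and the surrounding text only records the classes of graphs for which it has been verified. So there is nothing to compare your argument against except the paper's informal discussion. Your treatment is appropriate to that situation. The forward direction you give is correct and is the only part of the statement that admits a proof: an isomorphism $\phi\colon G\to H$ carries $G_i$ to $H_{\phi(i)}$, so the decks agree as multi-sets. Your account of the converse is also accurate as a survey of the standard machinery --- Kelly's counting identity underlying Proposition~\ref{Kelly} does determine $\binom{G}{F}$ for every proper $F$ from the deck alone, this does recover the edge count, degree sequence, and connectedness, and the known reconstructible classes (disconnected graphs, trees, regular graphs) are exactly the ones the paper lists. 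You correctly locate the obstruction at $|V_F|=n$, where the divisor $n-|V_F|$ vanishes and the subgraph-counting method gives no information about the top-level isomorphism type.

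Two small points. First, be careful not to present the reduction to ``agreement of all proper-subgraph counts'' as if it were equivalent to the conjecture: $\DD_G=\DD_H$ is a strictly stronger hypothesis than equality of the counts $\binom{G}{F}$ for all proper $F$, since the deck also records how those subgraphs are distributed among the $n$ cards; a proof strategy that only uses the counts is attacking a harder statement than necessary. Second, the paper cites verification for graphs on nine or fewer vertices, whereas you state eleven; the larger bound is indeed in the literature, but if your text is to sit alongside this paper you should either match its claim or supply the reference for the stronger one. Neither point is a mathematical error, and your proposal correctly refrains from claiming a proof of a statement that remains open.
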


The conjecture has been verified for a number of special cases: amongst other results, regular graphs, disconnected graphs, trees, separable graphs, and all graphs on nine or fewer vertices are reconstructible~\cite{Gross:2013:HGT:2613412}.
Furthermore, Bollob\'as has shown that three carefully chosen cards are sufficient to uniquely reconstruct the original graph for most decks~\cite{JGT:JGT3190140102}, that is, most graphs have a {\em reconstruction number} \(\operatorname{rn}(G) = 3\). This type of asymptotic result indicates that most graphs are easy to specify, in analogy with the Weisfeiler-Lehman method for graph isomorphism~\cite{Shervashidze2011}. 

Let us stress that the reconstruction conjecture is not about an finding an efficient algorithm to obtain an explicit reconstruction of a graph, but rather about the uniqueness of the reconstruction. 

\vspace{0.8em}
\noindent{\bf C) Reconstruction problem (constructive).}
Accordingly, we have the following category concerned with finding an explicit reconstruction:
given a legitimate deck, provide an efficient procedure to reconstruct a compatible graph.
\vspace{0.8em}

In what follows, we aim to treat the above Ulam's reconstruction problems 
for a special type of quantum states called {\em graph states}: 
given a collection of graph state marginals, we ask for the reconstruction of a corresponding joint state,
inquire about its uniqueness, and aim to develop methods to decide the legitimacy of the marginal deck.


\section{Set up}\label{sect:setup}
A few definitions concerning multipartite quantum states are in order. Denote by \(I, X, Y\), and \(Z\) the identity and the three Pauli matrices
\begin{align}
I &= \begin{pmatrix}
      1 & 0 \\
      0 & 1 
	\end{pmatrix} \,, 
&X &= \begin{pmatrix}
      0 & 1 \\
      1 & 0 
	\end{pmatrix} \,, 
&Y &= \begin{pmatrix}
      0 & -i \\
      i & 0 
	\end{pmatrix} \,,
&Z &= \begin{pmatrix}
      1 & 0 \\
      0 & -1 
	\end{pmatrix} \,.
\end{align}
The single-qubit Pauli group is defined as
\( \GG_1 = \langle i, X,Y,Z \rangle\), 
from which the \(n\)-qubit Pauli group is constructed by its $n$-fold tensor product
\( \GG_n = \GG_1 \ot \cdots \ot \GG_1\) 
\,\,($n$ times).
By forming tensor products of Pauli matrices, we obtain an orthonormal basis \(\P = \{P\}\)
of Hermitian operators acting on \((\mathbb{C}^2)^{\ot n}\), with \(\tr(P_\alpha P_\beta) = \delta_{\alpha\beta}2^n\).
We will write \(X_j, Y_j, Z_j\) for Pauli matrices acting on particle \(j\) alone.
Denote by \(\supp(P)\) the support of an operator \(P\in \P\), that is, the parties on which~\(P\) acts non-trivially with \(X\), \(Y\), or \(Z\). 
The weight of an operator is then the size of its support, \(\wt(E) = |\supp(E)|\).

Pure quantum states \(\ket{\psi}\) of \(n\) qubits are represented by unit 
vectors in \((\C^2)^{\ot n}\). Their corresponding 
density matrices~\footnote{{\em Mixed} states~\(\r\) are represented by positive-semidefinite Hermitean operators that are of unit trace but not of rank one, 
i.e. they are statistical mixtures of pure states: \(\r = \sum_i p_i \dyad{\psi_i}\) with \(p_i\geq 0\) and \(\sum_i p_i = 1\).}
\(\r = \dyad{\psi}\) can in terms of Pauli matrices be expanded as
\begin{equation}\label{eq:bloch_represenation}
	\r = 2^{-n} \sum_{P \in \P} \tr[P^\dag \r] P \,.
\end{equation}
Given a quantum state \(\r\), we obtain its marginal (also called reduction) on a subset \(A\) by acting with the partial trace on its complement \(A^c\),
\(\r_A = \tr_{A^c}(\r) \).
In the Bloch decomposition [Eq.~\eqref{eq:bloch_represenation}], 
the reduction onto subsystem \(A\) tensored by the identity on \(A^c\) can also be written as
\begin{equation}
	\r_A \ot \one_{A^c} \,\,= \!\!\sum_{\supp(P) \subseteq A} \tr[P^\dag \r] P \,.
\end{equation}
This follows from \(\tr_{A^c}(E) = 0\) if \(\supp(E) \not\subseteq A\).

A pure multipartite state is called {\em entangled}, if it cannot be written as the tensor product of single-party states,
\begin{equation}
  \ket{\psi}^{\text{ent}} \neq \ket{\psi_1} \ot \ket{\psi_2} \ot \cdots \ot \ket{\psi_n} \,.
\end{equation}
For a mixed state, entanglement is present if it cannot be written as a convex combination of product states,
\begin{equation}
 \varrho^{\text{ent}} \neq \sum_i p_i \varrho_1 \ot \varrho_2 \ot \cdots \ot \varrho_n \,,
\end{equation} 
for all single-party states \(\varrho_1 \dots \varrho_n \) and probabilities \(p_i\) with \(\sum_i p_i = 1\), \(p_i \geq 0\).
Note that for a pure product state, all reductions are pure too, having the purity \(\tr(\r_A^2) = 1\).

Lastly, we note that any pure state can, for every bipartition \(A|B\), be written as
\begin{equation}
 \ket{\psi_{AB}} = \sum_i \sqrt{\lambda_i} \ket{i_A} \ot \ket{i_B}\,,
\end{equation} 
where \(\{\ket{i_A}\}\) and \(\{\ket{i_B}\}\) are orthonormal bases for subsystems \(A\) and \(B\), and \(\sqrt{\lambda_i}\geq 0\) with \(\sum_i \lambda_i = 1\).
This is called the {\em Schmidt decomposition}.

We are now able to introduce the analogue of a graph deck for quantum states.
 \begin{definition}
     A \emph{quantum \(k\)-deck} is a collection of quantum marginals, 
     termed (quantum) cards, of size \(k\) each. The marginals are unlabeled and thus not associated to any specific subsystems.
     A~deck is called \emph{complete} if it contains \(\binom{n}{k}\) cards
     and \emph{legitimate} if it originates from a joint state.
 Given a quantum state \(\r\), its corresponding \(k\)-deck is given 
 by the collection of all its marginals of size \(k\),
 \begin{equation}
 	\DD_{\r} = \{ \r_A \,|\,\, |A| = k , A \subseteq \{1\dots n\}\}\,.
 \end{equation}
\end{definition}

With these definitions in place we can ask Ulam's reconstruction type questions about decks formed from quantum marginals:

\vspace{0.8em}
\noindent{\bf A) Legitimate deck problem.} 
A putative quantum deck be given, is it possible to obtain it from a (pure or mixed) quantum state of \(n\)~parties? 

\vspace{0.8em}
\noindent{\bf B) Reconstruction problem (uniqueness).}
Given a legitimate quantum \(k\)-deck~\(\DD_\r\), is there, up to particle relabeling, a unique quantum state corresponding to it? For a given state~\(\r\), what is the size of the smallest subdeck such that a unique reconstruction amongst pure (mixed) states is feasible, i.e., what is its {\em reconstruction number}~\(\operatorname{rn}_{\text{pure (mix)}}(\r)\)?

\vspace{0.8em}
\noindent{\bf C) Reconstruction problem (constructive).}
Given a legitimate quantum deck, provide an efficient procedure to reconstruct a compatible (pure or mixed) quantum state. 
Such type of problem is important in e.g. quantum state tomography, 
where often only few-body correlations together with their particle labels are known.
\vspace{0.8em}


\section{Graph states}\label{sect:graph_states}
To approach Ulam type problems in the quantum setting, let us introduce {\em graph states}. 
These are a type of pure quantum states which are completely characterized by corresponding graphs. 
Of course we could also approach a more general setting by considering generic pure states. 
Our choice is suggested by the immediate connection between graph states and graphs. 
Our hope is that the mathematical richness of graph states could highlight some new perspective 
on Ulam's problem, even when we restrict our attention to graph states only. 

\begin{definition}[\cite{PhysRevA.69.062311}]
  Given a simple graph \(G = (V,E)\) having \(n\) vertices, its corresponding \emph{graph state} \(\ket{G}\) is defined as
  the common $(+1)$-eigenstate of the $n$ commuting operators \(\{g_i\}\),
    \begin{equation}
      g_i = X_i \bigotimes_{j \in N(i)} Z_j \,.
    \end{equation}
  Thus \(g_i \ket{G} = \ket{G}\) for all \(g_i \). The set \(\{g_i\}\) is called the \emph{generator} of the graph state.
\end{definition}

To obtain \(\dyad{G}\) explicitly, the notion of its {\em stabilizer} is helpful. The stabilizer \(S\) is the Abelian group obtained by
the multiplication of generator elements,
\begin{equation}
	S = \Big\{ s = \prod_{i \in I}  g_i \,\, | \,\, I \subseteq \{1,\dots, n\}\Big\} \,.
\end{equation}
Each of its \(2^n\) elements stabilize the state, \(g_i \ket{G} = \ket{G}\) for all \(g_i \).
Naturally, the stabilizer forms a subgroup of the 
\(n\)-party Pauli-group which consists of all elements in \(\P\) in addition
to a complex phase \(\{\pm 1,\pm i\}\).
With this the graph state reads~\cite{PhysRevA.69.062311}
\begin{equation}
	\dyad{G} = \frac{1}{2^n} \sum_{s \in S} s \,.
\end{equation}
On the other hand, it can be shown that the graph state can also be written as
\begin{equation}\label{eq:graph_state_C}
  \ket{G} = \prod_{e \in E} C_{e} \ket{\plus}_{V} \,,
\end{equation}
where \(\ket{\plus}_V = \bigotimes_{j\in V} (\ket{0}_j + \ket{1}_j)/\sqrt{2}\), 
and the controlled-\(Z\) gate acting on the parties in edge \(e=(i,j)\) reads \(C_{e} = \operatorname{diag}(1,1,1-1)\).

In this picture, it is evident that graph states can be described as real equally weighted states:
by initializing in \(\ket{+}_V\), 
an equal superposition of all computational basis states is created. 
The subsequent application of \(C_e\) gates then only changes certain signs in this superposition.

To understand when two non-isomorphic graphs give different but 
comparable quantum states, let us take a small detour. The first step is to clarify the meaning of {\em comparable}. 
When looking at similarities between quantum states, the equivalence up to local unitaries (LU) 
is often considered. Two \(n\)-qubit states \(\sigma\) and \(\r\) are said to be \emph{LU-equivalent}, if there exist unitaries \(U_1, \dots, U_n \in SU(2)\), such that
\( \sigma = U_1 \ot \cdots \ot U_n \,\, \r \,\, U_1^\dag \ot \cdots \ot U_n^\dag \).
If no such matrices exist the states are said to be {\em LU-inequivalent}. An interesting subset of unitaries to consider is the so-called local Clifford group~\(\CC_n\).
It is obtained by the n-fold tensor product of the one-qubit Clifford group \(\CC_1\),
\begin{equation}
 \CC_1 = \Big<  \frac{1}{\sqrt{2}} 
    \begin{pmatrix}
      1 & 1 \\
      1 &-1
    \end{pmatrix},
    \begin{pmatrix}
      1 & 0 \\
      0 & i
     \end{pmatrix} \Big> \,.
\end{equation}
The group \(\CC_1\) maps the one-qubit Pauli group \(\GG_1 = \langle i, X,Y,Z\rangle\) to itself under conjugation.
The \(n\)-qubit local Clifford group \(\CC_n = \CC_1 \ot \dots \ot \CC_1\) (\(n\) times)
then similarly maps the \(n\)-qubit Pauli group \(\GG_n\) to itself under conjugation.
Interestingly, it was shown that the action of local Clifford operations on a graph state
can be understood as a sequence of local complementations on the corresponding graph~\cite{PhysRevA.69.022316}.
This works in the following way: given a graph~\(G\), its local complementation with respect 
to vertex~\(j\) is defined as the complementation of the subgraph in~\(G\) consisting 
of all vertices in its neighborhood~\(N(j)\) and their common edges.
We conclude that if two graphs are in the same local complementation orbit, 
then their corresponding graph states must be equivalent under the action of 
local Clifford operations, and vice versa. Thus they must also be LU-equivalent.
The contrary however is not necessarily true. Indeed,
it has been shown that there exist LU-equivalent graph states which are not
local Clifford equivalent~\cite{journals/qic/JiCWY10, 1751-8121-50-19-195302}.

We begin our analysis with an observation concerning the reductions 
of graph states onto \(n-1\) parties~\cite{1751-8121-48-9-095301}.
For this, let us define a {\em vertex-shrunken} graph:
the vertex-shrunken graph~\(S_i\) is obtained by deleting vertex~\(i\) and by shrinking all of its incident edges~\((i,j)\) to so-called one-edges~\((j)\). For a simple graph, the operation simply marks all vertices adjacent to~\(i\). To clarify the meaning of this notion, it is useful to look at a generalization to hypergraphs. Hypergraphs can have edges containing more than two vertices. Consider now deleting a single vertex~\(i\) from a hypergraph: an incident hyperedge \(e\) can, instead of simply being discarded, be {\em shrunken} such as to still contain all remaining vertices. The shrunken edge then reads \(e \backslash \{i\}\). In that way, a \(k\)-edge, initially connected \(k\)~vertices, becomes a \((k-1)\)-edge. Consequently, shrinking a \(2\)-edge yields a one-edge. One has
\begin{align}
S_i = (V \backslash {i}, 
\{ e\backslash {i} \, |\, i \in e\} \cup
\{e \,|\, i \notin e\})\,,
\end{align}
and the notion of a one-edge is well-motivated.
The following Proposition, originally obtained by Ref.~\cite{PhysRevA.69.062311}, is concerned with expressing reductions of graph states in terms of vertex-deleted and -shrunken graphs.

\begin{proposition}[Lyons et al.~\cite{PhysRevA.69.062311}]
Consider the quantum \((n-1)\)-deck of a graph state~\(\ket{G}\). Then, each of its cards can be represented by two graphs:
a vertex-deleted graph \(G_j\) and a vertex-shrunken graph \(S_j\), each having~\((n-1)\) vertices.
\end{proposition}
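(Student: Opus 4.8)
The plan is to work directly from the controlled-$Z$ representation of the graph state [Eq.~\eqref{eq:graph_state_C}] and to peel off the traced-out qubit by hand. First I would split the edge set into the edges $E_j = \{(j,k) : k\in N(j)\}$ incident to the deleted vertex and the remaining edges $E' = E\setminus E_j$, writing
\[
 \ket{G} = \prod_{e\in E'} C_e \,\prod_{k\in N(j)} C_{jk}\, \big(\ket{\plus}_j \ot \ket{\plus}_{V\setminus\{j\}}\big)\,.
\]
Since every $C_{jk}$ is diagonal in the computational basis of qubit $j$, I would expand $\prod_{k\in N(j)} C_{jk} = \ketbra{0}{0}_j \ot \one + \ketbra{1}{1}_j \ot \prod_{k\in N(j)} Z_k$ and act on $\ket{\plus}_j$. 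The diagonal gates $C_e$ (with $e\in E'$) and the $Z_k$ all commute, and $\prod_{e\in E'} C_e \ket{\plus}_{V\setminus\{j\}}$ is by definition the graph state $\ket{G_j}$ of the vertex-deleted graph. This yields the Schmidt-type form
\[
 \ket{G} = \tfrac{1}{\sqrt 2}\Big( \ket{0}_j \ot \ket{G_j} + \ket{1}_j \ot \textstyle\prod_{k\in N(j)} Z_k \ket{G_j}\Big)\,.
\]

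Next I would trace out qubit $j$. The off-diagonal contributions $\ketbra{0}{1}_j$ and $\ketbra{1}{0}_j$ drop out, and the two diagonal terms survive with equal weight, giving
\[
 \r_{V\setminus\{j\}} = \tfrac12\Big( \dyad{G_j} + \textstyle\prod_{k\in N(j)} Z_k\, \dyad{G_j}\, \prod_{k\in N(j)} Z_k\Big)\,.
\]
The first summand is manifestly the graph state associated with the vertex-deleted graph $G_j$. For the second summand I would invoke the dictionary between edges and controlled-$Z$ gates: just as a $2$-edge is implemented by a $C_e$ gate, a one-edge on vertex $k$ is implemented by the single-qubit gate $\operatorname{diag}(1,-1)=Z_k$. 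Hence $\prod_{k\in N(j)} Z_k \ket{G_j}$ is exactly the graph state obtained from $G_j$ by placing a one-edge on each former neighbor of $j$, which is precisely the defining operation of the vertex-shrunken graph $S_j$. Thus the card is the equal mixture $\r_{V\setminus\{j\}} = \tfrac12(\dyad{G_j} + \dyad{S_j})$ and is completely specified by the two graphs $G_j$ and $S_j$.

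I expect the only genuinely delicate point to be the identification of a one-edge with a local $Z$, i.e.\ making the $S_j$-interpretation of $\prod_{k\in N(j)} Z_k\ket{G_j}$ precise; once the edge/gate dictionary is fixed this is immediate. As an independent cross-check I would also recompute the card from $\dyad{G}=2^{-n}\sum_{s\in S} s$: tracing out $j$ retains precisely the stabilizer elements acting as $\one$ on qubit $j$, namely the $s = \prod_{i\in I} g_i$ with $j\notin I$ and $|I\cap N(j)|$ even, and one verifies that restricting this subgroup to $V\setminus\{j\}$ reproduces the same operator, confirming the two-graph decomposition. A minor case to dispatch separately is $N(j)=\emptyset$ (an isolated vertex), where $S_j = G_j$, the two components coincide, and the card reduces to the pure state $\dyad{G_j}$.
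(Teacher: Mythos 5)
Your proposal is correct and follows essentially the same route as the paper: both start from the controlled-$Z$ representation, isolate qubit $j$ by splitting the edge set into edges incident to $j$ and the rest, and identify the two diagonal blocks after the partial trace as $\dyad{G_j}$ and $\dyad{S_j}$ via the one-edge/local-$Z$ dictionary. Your explicit Schmidt-type form, the stabilizer cross-check, and the isolated-vertex case are pleasant additions but do not change the argument.
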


\begin{proof}
In Eq.~\eqref{eq:graph_state_C}, let us single out vertex~\(j\) to be traced over.
\begin{align}
  \ket{G} &= \prod_{e\in E} C_e \ket{+}_V
	  = \left( \ket{0}_j \,+\! \prod_{e \in E \,|\, j \in e}\!\!\! C_{e \backslash \{j\} } \ket{1}_j  \right)  
	  \,\,\ot \!\prod_{e' \in E \,|\, j \notin e'} \!\!\!C_{e'} 
 \ket{+}_{V \backslash \{j\}}\,.
\end{align}
Note that if \(C_e\) is a controlled \(Z\)-gate acting on parties~\(i\) and~\(j\),
then \(C_{e\backslash\{j\}}\) is the local \(Z_i\) gate acting on party~\(i\) alone.
Thus performing a partial trace over subsystem \(j\) yields
\begin{align}
  \tr_j [\dyad{G}]  &= \langle 0_j \dyad{G} 0_j\rangle + 
		       \langle 1_j \dyad{G} 1_j \rangle \nn\\
  &= \frac{1}{2} \Big(\underbrace{
	\prod_{e \in E \,|\, j \notin e} \!\!\!C_{e}\,
	(\dyad{+})_{V \backslash \{j\}} 
	\prod_{e' \in E \,|\, j \notin e'} \!\!\!C_{e'}
	}_{\rm{delete}}   \nn\\
   &+ \underbrace{ 
	\prod_{i\in N(j)} \!\!Z_i
	\prod_{e \in E \,|\, j \notin e} \!\!\!C_{e} \,
	(\dyad{+})_{V \backslash \{j\}}  
	\prod_{e'\in e \,|\, j \notin e'} \!\!\!C_{e'}
	\prod_{i\in N(j)} \!\!Z_i 
	}_{\rm{shrink}} \Big) \,.
\end{align}
The reduction of a graph state onto \((n-1)\) parties is thus given by the equal mixture of two graph states: 
a vertex-deleted graph state \(\ket{G_i}\), whose graph is the vertex-deleted subgraph of~\(G\), 
and a vertex-shrunken graph state \(\ket{S_j}\), whose graph is a vertex-deleted subgraph 
with additional one-edges on \(N(j)\) caused by shrinking all edges adjacent to~\(j\). These one-edges correspond to 
local \(Z_j\)-gates. One obtains
\begin{align}
	  \ket{G_j} &= \prod_{e\in E \,|\, j \notin e} C_{e} \ket{+}_{V \backslash \{j\}}\,, &
	  \ket{S_j} &= \prod_{i\in N(j)} Z_i  \prod_{e\in E \,|\, j \notin e}\!\!\! C_{e} \ket{+}_{V \backslash \{j\}} \,,
\end{align}
and we can write
  \(
      \tr_j(\dyad{G}) = \frac{1}{2}(\dyad{G_j} + \dyad{S_j}) \,. \label{eq:red_graph_state}
  \)
This ends the proof.
\end{proof}
%
If the graph \(G\) is fully connected, then \( \braket{G_j}{S_j} = 0\) for all~\(j\). This follows from the fact
that all stabilizer elements corresponding to a fully connected graph must have weights larger or equal than two.
Thus the one-body reductions are maximally mixed, and the complementary \((n-1)\)-body reductions
must be proportional to projectors of rank two. When tracing out more than one party, this procedure of substituting each graph by the equal mixture of its vertex-deleted and vertex-shrunken subgraphs is iteratively repeated. Thus the reduction of a graph state of size \(n-k\) is represented by
a collection of \(2^k\) graphs.


Let us now consider a specific formulation of the Ulam graph problem in the quantum setting
where all \((n-1)\)-body reductions of a graph state are given in the computational basis.
What can one say about the joint state? As shown by Lyons et al.~\cite{1751-8121-48-9-095301}, 
it turns out that the joint state can - up to a local \(Z\) gate - be reconstructed from a single card.

\begin{proposition}[Lyons et al.~\cite{1751-8121-48-9-095301}]
\label{prop:rec_from_one_card}
    Given a legitimate \((n-1)\)-deck of a graph state \(\ket{G}\) in the computational basis, 
    the joint state \(\ket{G}\) can be reconstructed up to local \(Z_j\) gates from any single card.
\end{proposition}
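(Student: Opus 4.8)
The plan is to use the previous proposition, which identifies a single card with the rank-two reduction $\r_{V\setminus\{j\}} = \tfrac12(\dyad{G_j} + \dyad{S_j})$, and then to exploit the rigid computational-basis structure of graph states to pin down the two graphs $\ket{G_j}$ and $\ket{S_j}$ inside its support. First I would record the basic ambiguity that forces the ``up to $Z_j$'' qualifier: since $Z_j$ acts trivially on $V\setminus\{j\}$, the states $\ket{G}$ and $Z_j\ket{G}$ have identical $(n-1)$-body reductions and hence produce exactly the same card. So no procedure based on one card can do better than reconstruct $\ket{G}$ up to a local $Z_j$, and the content of the statement is that one can do at least this well.

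For the reconstruction itself, write $\ket{G} = \tfrac{1}{\sqrt2}(\ket{0}_j\ket{G_j} + \ket{1}_j\ket{S_j})$ as in the proof of the previous proposition, so that the card is $\tfrac12$ times the projector onto $W = \mathrm{span}\{\ket{G_j},\ket{S_j}\}$ (using $\braket{G_j}{S_j}=0$, which holds whenever $j$ is non-isolated). The key step is to recover the unordered pair $\{\ket{G_j},\ket{S_j}\}$, up to signs, from $W$ given in the computational basis. I would do this by computing the matrix entries $\langle x|\r|y\rangle$, which equal $\pm 2^{-(n-1)}$ when $x,y$ lie in the same coset of the hyperplane $\{x : \sum_{i\in N(j)} x_i = 0\}$ and vanish otherwise. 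The vanishing pattern partitions the basis into the two cosets (the two sectors on which $\ket{S_j}$ agrees with, resp.\ differs in sign from, $\ket{G_j}$), and within each sector the signs of $\langle x|\r|y\rangle$ fix the amplitudes of $\ket{G_j}$ up to one global sign per sector; the four resulting sign choices give precisely $\pm\ket{G_j}$ and $\pm\ket{S_j}$. Equivalently, one checks directly that the only real equally weighted states in $W$ are $\pm\ket{G_j},\pm\ket{S_j}$, since a real combination $\alpha\ket{G_j}+\beta\ket{S_j}$ has constant amplitude modulus only when $\alpha\beta=0$.

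Next I would break the residual ambiguity between $\ket{G_j}$ and $\ket{S_j}$ using that graph states are exactly the states $2^{-(n-1)/2}\sum_x(-1)^{q(x)}\ket{x}$ whose sign function $q$ is a homogeneous quadratic form over $\mathbb{F}_2$. The deleted-vertex state $\ket{G_j}$ is a genuine graph state, whereas $\ket{S_j} = \prod_{i\in N(j)} Z_i\ket{G_j}$ carries the inhomogeneous linear term $\sum_{i\in N(j)} x_i$ and so is not; hence exactly one of the two recovered states is a genuine graph state, which identifies it as $\ket{G_j}$ and the other as $\ket{S_j}$. Re-attaching qubit $j$ by forming $\tfrac{1}{\sqrt2}(\ket{0}_j\ket{G_j} + \ket{1}_j\ket{S_j})$ then returns $\ket{G}$, and the only freedom left is the relative sign of the two recovered vectors, which toggles $\ket{G}\leftrightarrow Z_j\ket{G}$ --- matching the unavoidable ambiguity from the first paragraph.

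I expect the main obstacle to be the middle step: arguing cleanly that the computational-basis data of the rank-two card really does isolate $\{\ket{G_j},\ket{S_j}\}$ rather than merely their span, and that the graph-state (homogeneous quadratic) criterion both separates them and collapses all remaining sign and sector-labelling choices to precisely a single local $Z_j$. A secondary point to dispatch is the degenerate case of an isolated vertex $j$, where $\ket{G_j}=\ket{S_j}$, the card becomes pure, and qubit $j$ factors out as $\ket{\plus}_j$; here the reconstruction is immediate and the $Z_j$ freedom is simply the inability to tell $\ket{\plus}_j$ from $\ket{\minus}_j$.
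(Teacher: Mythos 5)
Your proposal is correct, and its engine is the same as the paper's: both exploit that $\ket{G_j}$ and $\ket{S_j}$ are real equally weighted states, so that the card's computational-basis matrix entries are $0$ or $\pm 2^{-(n-1)}$, and the zero pattern together with the signs pins down the unordered pair up to global signs. Your coset description of the zero pattern (entries vanish exactly when $x,y$ lie in different cosets of $\sum_{i\in N(j)}x_i=0$) is the structural version of the paper's entrywise bootstrap, which starts from the all-zeros row, reads off $\alpha_l=\beta_l=\operatorname{sign}(C^k_{1l})$ where that entry is nonzero, and uses a row from the other block to fix the remaining coefficients. Where you genuinely depart from the paper is the last step: the paper stops with \emph{two} candidate joint states, one for each assignment of the recovered vectors to ``deleted'' and ``shrunken'', whereas you invoke the characterization of graph states by homogeneous quadratic sign functions over $\mathbb{F}_2$ to single out $\ket{G_j}$ as the unique genuine graph state in the recovered pair (for non-isolated $j$, the sign function of $\ket{S_j}$ carries the linear term $\sum_{i\in N(j)}x_i$, and uniqueness of the multilinear representation rules it out). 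This buys a sharper conclusion --- the only residual freedom is the relative sign, i.e.\ $\ket{G}\leftrightarrow Z_j\ket{G}$ --- while the paper's two candidates differ by $\prod_{i\in N(j)}Z_i$ (equal to $X_j\ket{G}$ via the stabilizer generator $g_j$), which is still within the stated ``local $Z$ gates'' ambiguity but strictly larger. Your explicit handling of the isolated-vertex case, and your observation that $\braket{G_j}{S_j}=0$ requires only that $j$ be non-isolated (the paper assumes full connectivity for this), are minor but genuine refinements; the only loose phrasing is that $\ket{G}$ and $Z_j\ket{G}$ share the single card $\tr_j(\dyad{G})$, not all $(n-1)$-body reductions, though that is all the unavoidable-ambiguity argument needs.
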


\begin{proof}
Let us expand the graph states \(\ket{G_j}\) and \(\ket{S_j}\) as
appearing in~\eqref{eq:red_graph_state} in the computational basis.
Due to our ignorance about the joint state, denote them by \(\ket{\alpha}\) and
\(\ket{\beta}\), where either one could be the vertex-deleted graph state,
with the other one being the vertex-shrunken graph state. 
From Eq.~\eqref{eq:graph_state_C} follows that all graph states are real equally weighted states.
Thus it is possible to expand 
\begin{align}
	&\ket{\alpha} = \frac{1}{\sqrt{N}} \sum_{i=0}^{N-1} \alpha_i \ket{i}\,, &
	&\ket{\beta} =  \frac{1}{\sqrt{N}} \sum_{i=0}^{N-1} \beta_i  \ket{i} \,,
\end{align}
where \(N = 2^n\), \(\alpha_i, \beta_i \in \{ -1,1\} \), and \(\{\ket{i}\}\) being the computational basis for \(V\backslash \{j\}\).
We can therefore write the card \(C^k = \tr_k(\dyad{G})\) as
\begin{equation}
	C^k = \frac{1}{2N}\sum_{i,j=0}^{N-1}(\alpha_i\alpha_j + \beta_i\beta_j) \ketbra{i}{j} \,.
\end{equation}
Because of \(\alpha_i, \beta_i \in \{-1,1\}\), 
\(2N C^k_{ij}\) can only be \(0\) or \(\pm 1\).
Because \(\ket{0\dots 0}\) remains unaffected by conditional phase gates, \(\alpha_1 = \beta_1 = 1\).
Furthermore, \(\alpha_j = \beta_j = \on{sign}(C^k_{1j})\) for all \(j\) where \(C^k_{1j} \neq 0\). 
On the other hand, when \(C^k_{1l} = 0\), then \(\alpha_l = -\beta_l\). 
Without loss of generality, set \(\alpha_m = -\beta_m = 1\) for the first instance of \(m\) where this happens. 
The remaining but yet undetermined coefficients \(\alpha_l = -\beta_l \in \{\pm 1\}\) are given from
the entries \(C^k_{ml}\),
\begin{align}
  \alpha_m \alpha_l + \beta_m \beta_l &= \alpha_l - \beta_l = 2 \alpha_l = 2N C^k_{ml}\,.
\end{align}
This completely determines the remaining coefficients of \(\ket{\alpha}\) and \(\ket{\beta}\). 
Now the task is to reconstruct the graphs corresponding to \(\ket{\alpha}\) and \(\ket{\beta}\).
This can be done by iteratively erasing all minus signs in the computational basis expansion~\cite{1367-2630-15-11-113022}:
first, minus signs in front of terms having a single excitation only, 
e.g. \(\ket{0 \dots 010 \dots \dots 0}\), are removed by local \(Z_j\) gates.
Then, conditional phase gates are applied to erase minus signs in front of 
terms having two excitations, and so on. 
By this procedure, one obtains the state \(\ket{+}^{\ot n}\) and all the gates necessary to
obtain the original graph state, thus determining the graph.

Note that the symmetric difference of the two graphs corresponding to \(\ket{\alpha}\) and \(\ket{\beta}\) yields 
all edges that were severed under the partial trace operation,
\begin{equation}
	\underbrace{  \prod_{e\in E \,|\,  j \in e} C_{e \backslash \{j\} }  
		      \prod_{e'\in E \,|\, j \notin e'} \!\!\!C_{e'}  }_{\text{shrink}} 
	\,\,\underbrace{  \prod_{e''\in E \,|\, j \notin e''} \!\!\!C_{e''}  }_{\text{delete}}
	\,\,= \!\!\underbrace{\prod_{e\in E \,|\, j \in e}}_{\text{edges connected to $j$}}  \!\!\!\!\!\!\!\!\!C_{e \backslash \{j\}}
	\,\,= \prod_{i \in N(j)} \!Z_j \,.
\end{equation}
The original graph state can then only be one of the following
\begin{align}
	\dyad{G} &= \prod_{e\in E \,|\, j \in e} \!\!\!C_{e} \,\ket{\alpha} \otimes \ket{+}_j  \,,\quad \text{ or} \nonumber\\
	\dyad{G} &= \prod_{e\in E \,|\, j \in e} \!\!\!C_{e} \,\ket{\beta}  \otimes \ket{+}_j  \,.
\end{align}
This proves the claim.
\end{proof}

Let us compare this to the classical case: there three cards suffice to reconstruct most graphs~\cite{JGT:JGT3190140102}.
The fact that graph states can - up to local $Z$-gates - be reconstructed from a single marginal of size \((n-1)\) illustrates how the state's density matrix inherently contains more information about the original graph than its classical counterpart the adjacency matrix. This is to be expected: the former is of size \(2^n \times 2^n\), while the latter has dimensions \(n\times n\) only.


\section{The weight distribution}\label{sect:weight_distribution}
In order to determine whether or not, given a quantum \(k\)-deck, 
a joint graph state could possibly exist, we introduce 
the {\em weight distribution} of quantum states. 
This is a tool from the theory of quantum error correction 
that can be used to characterize the number of errors a code can correct.
Parts of the weight distribution can be obtained from a complete quantum \(k\)-deck already,
and no knowledge of the labeling of the individual parties is needed, making this tool useful for legitimate deck type problems.
It turns out that the weight distribution can indeed signal the illegitimacy of certain decks, 
that is, collections of marginals that are incompatible with any joint state.

\begin{definition}[\cite{681316, PhysRevLett.78.1600, Aschauer2004}]
The weight distribution of a multipartite qubit quantum state \(\r\) is given by
\begin{equation}
	A_j(\r) = \sum_{\substack{P \in \P \\ \wt(P)=j}} \!\! \tr (P \r) \tr (P^\dag \r) \,,
\end{equation}
where the sum is over all elements \(P\) of weight \(j\) in the \(n\)-qubit Pauli basis \(\P\).
\end{definition}
Note that for higher dimensional quantum systems any appropriate orthonormal 
tensor-product basis can be chosen instead of the Pauli basis, e.g. the Heisenberg-Weyl or Gell-Mann basis.
The weights \(A_j\), being quadratic in the coefficients of the density matrix and invariant under local unitaries, are so-called polynomial invariants of degree two. They characterize the distance of quantum error-correcting codes~\cite{681316, 1751-8121-51-17-175301} and can be used to detect entanglement~\cite{Aschauer2004, PhysRevA.91.042339, PhysRevA.94.042302, PhysRevA.87.034301, 1710.02473v2}.

For graph states, the weight distribution is particularly simple: 
because \(\tr[P\dyad{G}]\) can only be either \(0\) or \(\pm 1\), 
the weight distribution of \(\ket{G}\) is simply given by
the number of its stabilizer elements having weight \(j\),
\begin{equation}
	A_j(\ket{G}) = | \{s \in S | \wt(s)=j \}| \,.
\end{equation}

\begin{example}
  The three-qubit graph state corresponding to the 
  fully connected graph of three vertices has the generator 
  \(	G = \{XZZ, ZXZ, ZZX\}\)~\footnote{This is a state that is LU-equivalent to the Greenberger-Horne-Zeilinger state $(\ket{000} + \ket{111})/\sqrt{2}$.}. 
  Its stabilizer reads
  \begin{equation}
    S = \{III, IYY, YIY, YYI, 
    XZZ, ZXZ, ZZX, -XXX\} \,.
  \end{equation}
  Accordingly, its weight distribution is \(A = [A_0, A_1, A_2, A_3] = [1, 0, 3, 4]\). 
  By normalization, \(A_0 = \tr(\r) = 1\) must hold for all states.
  Because \(\r\) is pure, \(\tr(\r^2) = 1\), and thus \(\sum_{j=0}^3 A_j(\ket{\psi}) = 2^3\).
\end{example}

As a warm-up, let us derive a result on the weight distribution known from quantum error correction~\cite{PhysRevA.69.052330} 
using properties of Pauli matrices only.
%
\begin{proposition}\label{prop:type_I_II}
    Given a graph state, the sum \(\Ae = \sum_{j=0}^{\nhf} A_{2j}\) 
    can only take two possible values,
    \begin{align}
    	\Ae =
    	\begin{cases}
	      2^{n-1}  	&\text{(type I)} \,, \\
	      2^n 		&\text{(type II)}\,.
    	\end{cases}
    \end{align}
\end{proposition}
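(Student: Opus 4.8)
The plan is to reduce the stated dichotomy to a single statement about a group homomorphism. Recall from the preceding discussion that for a graph state the weight distribution simply counts stabilizer elements by weight, \(A_j = |\{s \in S : \wt(s) = j\}|\), and that the stabilizer \(S\) is an abelian group of order \(2^n\). Since \(\Ae = \sum_{j=0}^{\nhf} A_{2j}\) is exactly the number of even-weight stabilizer elements, \(\Ae\) equals the cardinality of the set \(K = \{s \in S : \wt(s) \text{ even}\}\). I would therefore aim to show that \(K\) is a subgroup of \(S\) of index \(1\) or \(2\); as \(|S| = 2^n\), this immediately forces \(\Ae \in \{2^{n-1}, 2^n\}\), which is the claim.

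The natural route is to exhibit the map \(\chi : S \to \{\pm 1\}\), \(\chi(s) = (-1)^{\wt(s)}\), as a group homomorphism, for then \(K = \ker \chi\) is automatically a subgroup whose index is \(1\) (if \(\chi\) is trivial, yielding type II, \(\Ae = 2^n\)) or \(2\) (if \(\chi\) is onto, yielding type I, \(\Ae = 2^{n-1}\)). The heart of the argument is thus a purely Pauli-theoretic lemma: for any two Pauli operators \(P,Q\) one has \(\wt(PQ) \equiv \wt(P) + \wt(Q) + d(P,Q) \pmod 2\), where \(d(P,Q)\) is the number of qubits on which \(P\) and \(Q\) carry distinct non-identity Paulis. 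I would prove this by a position-by-position check: on each qubit the only way the product can violate additivity of weight parity is when both local factors are non-identity and unequal, since the local product is then again non-identity while each factor had contributed to the weight; in all other cases (one factor trivial, or both equal) additivity holds at that position.

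The key observation is then that \(d(P,Q)\) is precisely the number of qubits on which \(P\) and \(Q\) anticommute, so that \(P\) and \(Q\) commute if and only if \(d(P,Q)\) is even. Because every pair of stabilizer elements commutes, \(d(P,Q)\) is even throughout \(S\), and the lemma collapses to \(\wt(PQ) \equiv \wt(P) + \wt(Q) \pmod 2\); equivalently \(\chi(PQ) = \chi(P)\chi(Q)\), so \(\chi\) is a homomorphism and the proof concludes as above. The main obstacle, such as it is, lies in the single-qubit bookkeeping behind the lemma and in being careful that the phases arising in Pauli products (e.g. \(XY = iZ\)) are irrelevant to the support and hence to the weight; once the identification of \(d(P,Q)\) with the number of anticommuting positions is in place, the abelianness of \(S\) does the rest.
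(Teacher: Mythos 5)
Your proof is correct, and it takes a genuinely different route from the paper. The paper works at the level of the density operator: it splits $\dyad{G}=\tfrac{1}{2^n}(\Pe+\Po)$ into even- and odd-weight parts, invokes an external lemma on anticommutators of Pauli operators to get $\{\Pe,\Po\}=2^n\Po$, traces against $\r$ using $s\r=\r$ to obtain the quadratic relation $2\Ae\Ao=2^n\Ao$, and then splits into the cases $\Ao\neq 0$ and $\Ao=0$, the latter requiring purity ($\sum_j A_j=2^n$) to close. Your argument is purely group-theoretic: the parity lemma $\wt(PQ)\equiv\wt(P)+\wt(Q)+d(P,Q)\pmod 2$, combined with the identification of $d(P,Q)$ with the number of locally anticommuting positions, shows that $s\mapsto(-1)^{\wt(s)}$ is a character of the abelian stabilizer, so the even-weight elements form a subgroup of index $1$ or $2$. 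Your single-qubit bookkeeping is accurate (the only parity violation occurs when both local factors are distinct non-identity Paulis, which is exactly the locally anticommuting case), and phases are indeed irrelevant to weight. The two proofs share the same combinatorial core -- the cited anticommutator lemma is essentially your parity lemma in operator form -- but your packaging avoids both the purity argument and the case split, and yields the stronger structural statement that the even-weight elements form a subgroup. It also transfers directly to the remark following the paper's proof: for a reduction whose local stabilizer subgroup has order $2^{n-q}$, the same index-$1$-or-$2$ argument gives $\Ae\in\{2^{n-q-1},2^{n-q}\}$. What the paper's operator-algebraic formulation buys in exchange is that it is phrased in terms of $\tr(P\r)$ rather than group membership, which is the language used throughout the rest of the paper for general (non-stabilizer) states.
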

%
\begin{proof}
    Note that a graph state \(\r = \dyad{G}\) can be decomposed into 
    \begin{align}
    	\r &= \frac{1}{2^n}\Big(
			      \sum_{\substack{P \in \P \\ \wt(P) \text{ even}}} \!\! \tr[P^\dag M] P + 								  
			      \sum_{\substack{P \in \P \\ \wt(P) \text{ odd}}}  \!\! \tr[P^\dag M] P
			     \Big) 
        	 = \frac{1}{2^n}(\Pe + \Po)\,,
    \end{align}
    where \(\Pe\) and \(\Po\) are the sums of all stabilizer elements having even and odd weight respectively.
    Because of \(s \r = \r \) for all \(s\in S\), also \(\Pe\) and \(\Po\) have \(\r\) as an eigenvector.
    We apply this decomposition to \(\r = \r^2\), making use of Lemma $1$ from Ref.~\cite{PhysRevLett.118.200502}
    regarding the anti-commutators of elements from~\(\P\):
    the term \(\{\Pe, \Po\}\) appearing in \(\r^2 = \{\r, \r\}/2\) can only contribute to terms of odd weight in \(\r\),
    yielding     \( \{\Pe, \Po\}  = 2^n \Po \).
    Consequently, one obtains
    \begin{equation}
	  \tr ( \{\Pe, \Po\} \r )  = \tr ( 2^n \Po \r )\,.
    \end{equation}
    Accordingly,
    \( 2 \Ae \Ao = 2^{n} \Ao \),
    where \(\Ae\) and \(\Ao\) are the number of terms in the stabilizer that have even and odd weight respectively.
    Consider first \(\Ao \neq 0\). Then \(\Ae = 2^{n-1}\).
    Conversely, if \(\Ao = 0\), then \(\Ae = 2^{n}\),
    because \(\r\) is pure and must thus satisfy \(\sum A_j = \Ae + \Ao = 2^n\).
    This ends the proof.
\end{proof}
The same argument can be done for reductions of graph states that happen to be proportional to projectors of rank \(2^q\).
There, either \(\Ae = 2^{n-q-1}\) or \(\Ae = 2^{n-q}\) holds.

These two cases, that is, graph states of type~\(I\) and type~\(II\), are also known from the theory of 
{\em classical} self-dual additive codes over \(\operatorname{GF}(4)\)~\cite{681315, PhysRevA.69.052330}. 
If only stabilizer elements of even weight are present the code is said to be of type~\(II\), 
while codes having both even and odd correlations in equal amount are of type~\(I\).
It can be shown that all type~\(II\) codes must have even length,
and conversely, self-dual additive codes of odd length~\(n\) are always of type~\(I\).
This is also a direct consequence of the monogamy relation derived in Ref.~\cite{PhysRevLett.114.140402}
which is known to vanish for an odd number of parties, implying \(\Ae = \Ao = 2^{n-1}\).
Let us note that any graph states whose every vertex is connected to an odd number of 
other vertices is of type~\(II\)~\footnote{See Theorem~$15$ in Ref.~\cite{DANIELSEN20061351}.}.
For example, Greenberger-Horne-Zeilinger states of an even number of qubits are of this type, 
being LU-equivalent to fully connected graph states.

\begin{figure}[tbp]\label{fig:cube_hypergraph}
\centering
      \includegraphics[height = 8em]{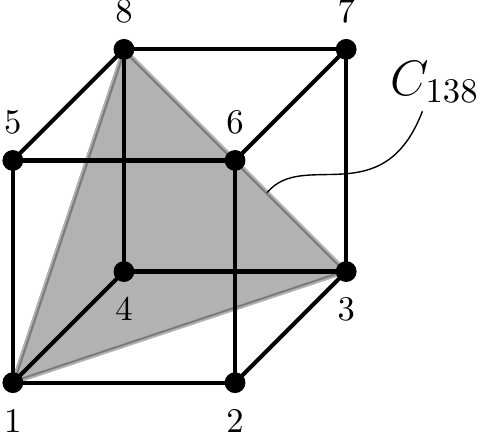}
      \caption[A hypergraph state that is LU-equivalent to graph states.]
      {a hypergraph state that cannot be transformed into a graph state by applying local unitaries. All of its three-body marginals are maximally mixed.}
\end{figure}

This result can be used to show that a particular state cannot be LU-equivalent to any graph state.
Let us consider the state depicted in Fig.~\ref{fig:cube_hypergraph}, which is a so-called {\em hypergraph state}~\cite{1367-2630-15-11-113022}. 
It can be obtained by applying the additional gate \(C_{138} = \operatorname{diag}(1,1,1,1,1,1,1,-1)\) between particles \(1\), \(3\), and \(8\) to the graph state of a cube, \(\ket{H} = C_{138} \ket{G_{\text{cube}}}\).
Its weight distribution reads
\begin{equation}
 A = [1,  0,   0,  0,  30,  48,  96, 48, 33] \,,
\end{equation}
with \(\Ae = \sum_{j \text{ even}} = 160\). This is incompatible with being a graph state of type~\(I\) or type~\(II\),
these having either \(\Ae = 128\) or \(\Ae = 256\) respectively. Because the weight distribution is invariant under LU-operations, 
the state must be LU-inequivalent to graph states.
Let us add that all the three-body marginals of \(\ket{H}\) are maximally mixed, 
and the state can thus be regarded as being highly entangled~\cite{PhysRevA.69.052330}.


Lastly, one could ask whether or not the presence of entanglement can be detected from the weight distribution of a state. 
This is indeed the case.
\begin{proposition}\label{prop:weight_dist_prod}
 Let \(\ket{\psi}\) be a pure product state on $n$ qubits. Then \(A_j(\ket{\psi}) = \binom{n}{j}\).
\end{proposition}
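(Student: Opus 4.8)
The plan is to exploit the multiplicative structure of both the weight distribution and product states. The key observation is that for a single qubit in a pure state $\ket{\phi}$, the weight distribution is $A_0 = 1$ and $A_1 = 1$, since $\sum_{P \in \{I,X,Y,Z\}} \tr(P\dyad{\phi})^2 = 2\tr(\dyad{\phi}^2) = 2$, with the identity contributing exactly $A_0 = \tr(\dyad{\phi})^2 = 1$ and the remaining three Pauli terms contributing $A_1 = 1$ by purity. So each single-qubit factor contributes a little polynomial $1 + z$ if I track weight by a formal variable $z$.

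First I would introduce the generating polynomial $W_\psi(z) = \sum_{j=0}^n A_j(\ket{\psi}) z^j$ and show that for a tensor product $\ket{\psi} = \ket{\phi_1} \otimes \cdots \otimes \ket{\phi_n}$ the weight distribution factorizes. The crucial point is that every element $P \in \P$ on $n$ qubits is a tensor product $P = P^{(1)} \otimes \cdots \otimes P^{(n)}$ of single-qubit Paulis, and $\tr(P \ket{\psi}\!\bra{\psi}) = \prod_{k=1}^n \tr(P^{(k)} \dyad{\phi_k})$ because both the operator and the state factorize across parties. Squaring and summing over all $P$ therefore splits into an independent product over the $n$ sites, and the weight of $P$ is just the number of sites where $P^{(k)} \neq I$. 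This gives
\begin{equation}
W_\psi(z) = \prod_{k=1}^n \Big( \sum_{Q \in \{I,X,Y,Z\}} \tr(Q\dyad{\phi_k})^2 \, z^{\wt_1(Q)} \Big) = \prod_{k=1}^n (1 + z) = (1+z)^n \,,
\end{equation}
where $\wt_1(Q)$ is $0$ for $Q = I$ and $1$ otherwise, and the single-site factor was evaluated above using purity of each $\ket{\phi_k}$.

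Finally I would read off the coefficient of $z^j$ from the binomial expansion $(1+z)^n = \sum_j \binom{n}{j} z^j$, which yields $A_j(\ket{\psi}) = \binom{n}{j}$ as claimed. I do not expect any serious obstacle here; the only point requiring a little care is the factorization of the trace $\tr(P\dyad{\psi})$ across the tensor-product structure, which relies on both $P$ and $\ket{\psi}$ being product operators/states, and the single-qubit purity identity $\sum_Q \tr(Q\dyad{\phi})^2 = 2$. Everything else is bookkeeping. An alternative, less transparent route would be to compute directly: each factor $\dyad{\phi_k}$ contributes the identity plus exactly one unit of weight-one Bloch vector, so choosing any $j$ of the $n$ sites to be nontrivial contributes $1$ to $A_j$, giving $\binom{n}{j}$ summands; but the generating-function argument packages this cleanly and is the version I would write up.
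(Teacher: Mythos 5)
Your proof is correct and is essentially the paper's argument in generating-function clothing: the paper establishes the one-qubit base case \(A_0^{(1)}=A_1^{(1)}=1\) and the convolution recurrence \(A_j^{(m)} = A_j^{(m-1)}A_0^{(1)} + A_{j-1}^{(m-1)}A_1^{(1)}\), which is exactly the coefficient-level statement of your factorization \(W_\psi(z)=(1+z)^n\). Your write-up does spell out the trace factorization that the paper leaves implicit, but the underlying idea (multiplicativity of the weight distribution over tensor factors plus Pascal's rule) is the same.
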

\begin{proof}
Let us assume that we are given a product state on \(m-1\) qubits with weights denoted by \(A_j^{(m-1)}\).
Tensoring it by a pure one-qubit state, the weight \(A^{(m)}_j\) of the resulting state on $m$ qubits reads
  \begin{equation}\label{req:recurrence}
    A_j^{(m)} = A_j^{(m-1)} A_0^{(1)}+ A_{j-1}^{(m-1)} A_1^{(1)}\,.
  \end{equation}
With \(A_0 = A_1 = 1\) for a pure one-qubit state, this matches the recurrence relation that is satisfied by the binomial coefficients, 
 namely
  \begin{equation}
  	\binom{m}{j} = \binom{m-1}{j} + \binom{m-1}{j-1}
  \end{equation}
  together with the initial condition \(A^{(1)}_j = \binom{1}{j} = 1\).
  Thus the weight distribution of a pure product state on \(n\) qubits reads \( A_j(\ket{\psi}) = \binom{n}{j}\) .
  \end{proof}

  Proposition~\ref{prop:weight_dist_prod} is readily seen to generalize to multipartite systems of higher and mixed dimensions: then the initial conditions have to be set accordingly;
  e.g. when considering $D$-level systems, \(A_0^{(1)} = 1\) and \(A_1^{(1)} = (D-1)\) leads to \(A_j(\ket{\psi}) = (D-1)^j \binom{n}{j}\).
 In similar spirit, by considering the purities of the marginals, one sees that if 
  \begin{equation}
\sum_{|S| = k} \tr( \r_S^2) =   D^{-k}  \sum_{j=0}^k \binom{n-j}{n-k}  A_j(\ket{\psi}) < \binom{n}{k} \,,
  \end{equation}  
the state \(\ket{\psi}\) has to be entangled across some partition of~\(k\) versus \((n-k)\) 
parties~\footnote{See the proof of Proposition~\ref{prop:cut_relation} for details on how the combinatorial terms arise.}.

Such type of conditions readily serve as entanglement criteria in pure states, 
and were already explored in Refs.~\cite{Aschauer2004, PhysRevA.91.042339, PhysRevA.94.042302}.
Further entanglement detection methods using weight distributions and related concepts can be found in Refs.~\cite{PhysRevA.87.034301, 1710.02473v2}.



\section{Some constraints on the weight distribution}\label{sect:weight_distr_constraints}

In the following, we derive further relations on the weight distribution of pure states.
These are obtained from the Schmidt decomposition along bipartitions having fixes sizes 
and from monogamy-like relations.
As we are dealing with Ulam type problems, not the complete weight distribution is given. 
Thus, let us define the {\em reduced} weight distribution, 
which is proportional to the average distribution 
that marginals of size \(m\) of a given quantum state \(\r\) show.
This notion is useful for the Ulam type problems that we consider in this article, 
as the reduced weight distribution \(A_j^m(\r)\) can already be obtained 
from a complete set of unlabeled marginals of size \(m\). 
We refer to Sec.~\ref{sec:detect_illegitimate_decks} for details on how this can be achieved.

\begin{definition}\label{def:red_weight_distr}
Let \(\r\) be a quantum state on \(n\) parties. Given its weight distribution \(A_j(\r)\), 
 define its associated reduced weight distribution \(A_j^m(\r)\) for \(0 \leq j\leq m\) as
  \begin{equation}\label{eq:red_weight_distr}
	  A_j^m(\r) = \binom{n-j}{n-m} A_j(\r) \,.
  \end{equation}
For the following proofs we also need the weight distribution on some subsystem~\(S\subseteq \{1\dots n\}\),
\begin{equation}
  A_j^S(\r) = \!\!\!\!\sum_{\substack{ P \in P , \, \wt(P) = j \\ \supp(P) \subseteq S }} \!\!\!\! \tr(P \r) \tr(P^\dag \r)\,.
\end{equation}
\end{definition}

We now state a first constraint on the weight distribution that arises from the Schmidt decomposition.
%
\begin{proposition}[Cut-Relations]\label{prop:cut_relation}
Let \(\ket{\psi}\) be a pure state of \(n\) qubits. For all \(1\leq m \leq n\), 
  the reduced weight distributions \(A_j^m(\ket{\psi})\) satisfy
  \begin{equation}
	2^{-m} \sum_{j=0}^m A_j^m(\ket{\psi})
	= 2^{-(n-m)} \sum_{j=0}^{n-m} A_j^{n-m}(\ket{\psi}) \,.
  \end{equation}
\end{proposition}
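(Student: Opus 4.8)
The plan is to show that, for every $1 \le k \le n$, the normalised sum $2^{-k}\sum_{j=0}^{k} A_j^k(\ket{\psi})$ is nothing but the total purity of all size-$k$ marginals, that is,
\[
  2^{-k}\sum_{j=0}^{k} A_j^k(\ket{\psi}) \;=\; \sum_{\substack{S \subseteq \{1,\dots,n\}\\ |S|=k}} \tr(\r_S^2)\,,
\]
with $\r = \dyad{\psi}$, and then to apply this once with $k=m$ and once with $k=n-m$, relating the two by the Schmidt decomposition. The proposition follows immediately once this intermediate identity is in hand.

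First I would unpack the combinatorial weight $\binom{n-j}{n-k}$ in Eq.~\eqref{eq:red_weight_distr}. For a Pauli operator $P$ of weight $j$, the number of subsets $S$ of size $k$ that contain $\supp(P)$ is $\binom{n-j}{k-j}=\binom{n-j}{n-k}$, since one chooses the remaining $k-j$ parties from the $n-j$ outside the support. Interpreting the coefficient this way and exchanging the order of summation gives
\[
  \sum_{j=0}^{k}\binom{n-j}{n-k}A_j(\ket{\psi})
  \;=\; \sum_{\substack{S\subseteq\{1,\dots,n\}\\ |S|=k}}\ \sum_{\substack{P\in\P\\ \supp(P)\subseteq S}}\tr(P\r)\,\tr(P^\dag\r)\,,
\]
where the Pauli terms are regrouped according to which $k$-subsets contain their support; the inner sum is exactly $\sum_{j=0}^{k}A_j^S(\ket{\psi})$ in the notation of Definition~\ref{def:red_weight_distr}.

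Second I would identify this inner sum with a purity. Expanding the marginal $\r_S$ in the Pauli basis of the $k$ qubits in $S$, using $\tr(P\r)=\tr(P\r_S)$ for $\supp(P)\subseteq S$ together with the orthogonality $\tr(P_\alpha P_\beta)=\delta_{\alpha\beta}2^{k}$, yields $\tr(\r_S^2)=2^{-k}\sum_{\supp(P)\subseteq S}\tr(P\r)\tr(P^\dag\r)$. Combining this with the previous display establishes the intermediate identity. Finally I invoke purity of the global state: for any bipartition $S\,|\,S^c$ the Schmidt decomposition forces $\r_S$ and $\r_{S^c}$ to share the same nonzero spectrum, so $\tr(\r_S^2)=\tr(\r_{S^c}^2)$, and since $S\mapsto S^c$ is a bijection between $k$-subsets and $(n-k)$-subsets, summing gives $\sum_{|S|=m}\tr(\r_S^2)=\sum_{|T|=n-m}\tr(\r_T^2)$, which is precisely the claimed equality.

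The argument is essentially bookkeeping, and the only genuine input is the Schmidt-decomposition equality of purities across complementary cuts, which is where purity of $\ket{\psi}$ enters. The point requiring care is the first step: one must verify that $\binom{n-j}{n-k}$ correctly counts the $k$-subsets containing a given support and that the reindexing is a genuine bijection on the set of (subset, Pauli) pairs, so that no term is over- or under-counted when the summation order is swapped.
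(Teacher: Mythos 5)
Your proof is correct and follows essentially the same route as the paper: both identify $2^{-k}\sum_j A_j^k$ with the summed purities $\sum_{|S|=k}\tr(\r_S^2)$ via the multiplicity $\binom{n-j}{n-k}$ of $k$-subsets containing a given Pauli support, and both then invoke the Schmidt-decomposition equality $\tr(\r_S^2)=\tr(\r_{S^c}^2)$ summed over complementary bipartitions. The only difference is cosmetic -- you isolate the purity identity as an explicit intermediate lemma, whereas the paper folds the counting into the summed Schmidt relation directly.
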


\begin{proof}
In the following, let us write \(A_j\) for \(A_j(\ket{\psi})\).
From the Schmidt decomposition of pure states, it follows that the purities of
reductions on complementary subsystems must be equal,
\begin{equation}\label{eq:Schmidt}
	\tr( \r_S^2) = \tr(\r_{S^c}^2)\,.
\end{equation}
Summing Eq.~\eqref{eq:Schmidt} over all bipartitions \(S\) and \(S^c\) having fixed sizes \(m \leq \nhf\) and \((n-m)\) respectively, one obtains
\begin{equation}\label{eq:bipart_sum}
	2^{-m} \sum_{|S| = m} \sum_{j=0}^m A_j^S  =  2^{-(n-m)} \!\!\sum_{|S^c| = n-m} \sum_{j=0}^{n-m} A_j^{S^c} \,.
\end{equation}
In the case of graph states, \(A_j^S\) is just the number of stabilizer elements of weight \(j\) having support in \(S\). 
Note that in Eq.~\eqref{eq:bipart_sum}, the dimensional prefactor results from the 
difference in normalization of \(\r_S\) and \(\r_{S^c}\).
By summing over all subsystem pairs of fixed size, elements of weight \(j\)
are overcounted by factors of \(\binom{n}{m}\binom{m}{j} \binom{n}{j}^{-1} = \binom{n-j}{n-m}\) and
\(\binom{n}{n-m}\binom{n-m}{j}\binom{n}{j}^{-1} = \binom{n-j}{m}\) respectively. 
We arrive at
\begin{equation}
	2^{-m} \sum_{j=0}^m \binom{n-j}{n-m} A_j 
	= 2^{-(n-m)} \sum_{j=0}^{n-m} \binom{n-j}{m} A_j \,.
\end{equation}
In terms of the reduced weight distribution, this simply reads as
\begin{equation}
	2^{-m} \sum_{j=0}^m A_j^m
	= 2^{-(n-m)} \sum_{j=0}^{n-m} A_j^{n-m} \,.
\end{equation}
This proves the claim.
\end{proof}
%
These are \(\lfloor (n-1)/2 \rfloor\) independent linear equations that
the weight distributions of pure states have to satisfy. Note that one of these only involves the weights up to \(A_{\nhf+1}\), which in turn can be obtained from the marginals of size \(\nhf + 1\). 
The relations can be seen as an alternate formulation of the so-called {\em quantum 
MacWilliams identity} for quantum error correcting codes in the special case of pure states~\cite{681316, 1751-8121-51-17-175301},
and generalize to states of higher local dimensions.

We now obtain further constraints on the reduced weight distributions 
that are obtained from the so-called {\em universal state inversion} and generalizations 
thereof~\cite{PhysRevLett.114.140402, 1751-8121-51-17-175301, PhysRevA.64.042315, PhysRevA.72.022311}.
In the case of qubits it can simply be attained through a spin-flip 
where every Pauli matrix in the Bloch decomposition changes sign, mapping
$I \mapsto I$,  $Y \mapsto -Y$,
$X \mapsto -X$, and $Z \mapsto -Z$~\footnote{
It is not hard to convince oneself that this spin-flip can be obtained by 
\(\tilde\r = Y^{\ot n} \r^T Y^{\ot n}\), where \(\r^T\) is the transpose of the given state \(\r\).}.
Thus when expanding a state \(\r\) in the Bloch representation [Eq.~\eqref{eq:bloch_represenation}],
the spin-flipped state shows a sign flip for all odd-body correlations,
\begin{equation}
\tilde\r 	= \frac{1}{2^n}\sum_{j=0}^n (-1)^j \!\!\!\!  \sum_{\substack{P \in \P \\ \wt(P) = j}}  \!\! \tr(P^\dag \r) P \label{eq:state_inv}\,.
\end{equation}

Note that because \(\tilde\r\) is positive semi-definite, the expression \(\tr(\r\tilde\r)\) must necessarily be non-negative. 
This leads to our next proposition.

\begin{proposition}\label{prop:shadow_on_reductions_S0}
Let \(\r\) be a state of \(n\) qubits. For all \(1 \leq m \leq n\), the reduced weight distributions \(A_j^m(\r)\) satisfy
the inequality
\begin{equation}\label{eq:shadow_on_reductions_S0}
  \sum_{j=0}^m (-1)^j A_j^m( \r) \geq 0 \,.
\end{equation}	
\end{proposition}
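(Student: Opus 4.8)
The plan is to reduce the claim to the single scalar inequality $\tr(\sigma\tilde\sigma)\geq 0$, which holds for \emph{every} qubit state $\sigma$ because both $\sigma$ and its spin-flip $\tilde\sigma$ are positive semi-definite, and then to transport this inequality to each reduction $\r_S$ and sum. First I would record the ``global'' ($m=n$) form. For a state $\sigma$ on $p$ qubits, I insert the Bloch expansion of $\sigma$ together with that of $\tilde\sigma$ from Eq.~\eqref{eq:state_inv} into $\tr(\sigma\tilde\sigma)$; the orthogonality $\tr(P_\alpha P_\beta) = 2^p\delta_{\alpha\beta}$ collapses the double Pauli sum to its diagonal, giving
\[
	\tr(\sigma\tilde\sigma) = 2^{-p}\sum_{j=0}^p (-1)^j A_j(\sigma)\,.
\]
Since $\sigma\succeq 0$ and $\tilde\sigma\succeq 0$, the left-hand side is non-negative, so $\sum_{j=0}^p (-1)^j A_j(\sigma)\geq 0$ for any qubit state. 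This is exactly the asserted inequality in the case $m=n$.

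Second, I would connect the subsystem weight distribution of Definition~\ref{def:red_weight_distr} to this global statement. For a fixed subsystem $S$ with $|S|=m$, the Bloch coefficients $\tr(P\r)$ with $\supp(P)\subseteq S$ are precisely the Bloch coefficients of the reduction $\r_S$ regarded as a state on $m$ qubits, so that $A_j^S(\r) = A_j(\r_S)$. Applying the global inequality to the $m$-qubit state $\sigma = \r_S$ then yields the per-subsystem bound $\sum_{j=0}^m (-1)^j A_j^S(\r)\geq 0$ for each individual $S$.

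Third, I would sum over all $m$-subsets and identify the result with the reduced weight distribution. A weight-$j$ Pauli $P$ is supported inside exactly $\binom{n-j}{m-j}=\binom{n-j}{n-m}$ subsystems of size $m$, so $\sum_{|S|=m}A_j^S(\r) = \binom{n-j}{n-m}A_j(\r) = A_j^m(\r)$, matching Eq.~\eqref{eq:red_weight_distr}; this is the same overcounting argument already used in Proposition~\ref{prop:cut_relation}. Summing the per-subsystem inequalities and exchanging the order of summation gives
\[
	0\leq \sum_{|S|=m}\sum_{j=0}^m (-1)^j A_j^S(\r) = \sum_{j=0}^m (-1)^j A_j^m(\r)\,,
\]
which is the claim. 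The only delicate points are bookkeeping: verifying the powers of two in the collapse of $\tr(\sigma\tilde\sigma)$ and confirming $A_j^S(\r)=A_j(\r_S)$, i.e.\ that restricting the Bloch sum to operators supported on $S$ reproduces the reduction's own weight distribution with the correct normalization. I do not anticipate a genuine obstacle here; the essential idea is to prove the inequality globally and lift it to reductions by summing, rather than to manipulate the coefficients $\binom{n-j}{n-m}$ directly.
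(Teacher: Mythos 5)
Your proposal is correct and follows essentially the same route as the paper: establish $\tr(\sigma\tilde\sigma)=2^{-p}\sum_j(-1)^jA_j(\sigma)\geq 0$ from positivity of the state and its spin-flip, apply it to each reduction $\r_S$ (using $A_j^S(\r)=A_j(\r_S)$, which holds since $\tr(P\r)=\tr(P_S\r_S)$ for $\supp(P)\subseteq S$), and sum over all $m$-subsets with the overcounting factor $\binom{n-j}{n-m}$. The paper's proof is the same argument written more compactly.
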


\begin{proof}
We evaluate \(\tr(\r \tilde \r) \geq 0\) in the Bloch decomposition.
\begin{align}
	\tr(\r \tilde \r) &= \frac{1}{2^{2n}} \tr\Big[
				  \big(\sum_{j=0}^n (-1)^j \!\!\!\! \sum_{\substack{P \in \P \\ \wt(P) = j}}  \!\!
				  \tr(P      \r) P^\dag \big)
				  \big(\sum_{j'=0}^n  \!\! \sum_{\substack{P' \in \P \\ \wt(P) = j'}} \!\!
				       \tr(P'^\dag \r) P' \big)
				   \Big] \nn\\
			      &= \frac{1}{2^{2n}} 
				  \sum_{j=0}^n (-1)^j 	     \!\!\!\! \sum_{\substack{P \in \P \\ \wt(P) = j}} \!\!
				  \tr(P      \r) \tr(P^\dag \r) \tr(P^\dag P)  \nn\\
			      &= \frac{1}{2^n} \sum_{j=0}^n (-1)^j A_j \,\geq 0 \,.
\end{align}
Applying the same method to all reductions \(\r_S\) of fixed size \(|S| = m\),
one obtains
\begin{equation}
      \sum_{|S|=m} \tr[ \r_S \tilde\r_S] = 2^{-m} \sum_{|S|=m} \sum_{j=0}^m (-1)^j A_j^S
					     = 2^{-m} \sum_{j=0}^m (-1)^j \binom{n-j}{n-m} A_j \geq 0\,.
\end{equation}
Up to a dimensional constant, this can be rewritten as
\(  \sum_{j=0}^m (-1)^j A_j^m \geq 0 \,. \)
This ends the proof.
\end{proof}

For pure states, the expression \(\tr(\r \tilde\r)\) is an entanglement monotone called $n$-concurrence~\cite{PhysRevA.63.044301}.
In light of Refs.~\cite{817508, 1751-8121-51-17-175301} on the shadow enumerator of quantum codes, Eq.~\eqref{eq:shadow_on_reductions_S0}
can also be restated as the requirement that the zeroth {\em shadow coefficient} \(S_0(\r_S) = \tr(\r_S \tilde{\r}_S)\) be 
non-negative when averaged over all \(m\)-body marginals \(\r_S\).
In the case of graph states and stabilizer codes, this expression must necessarily be integer, 
as it is obtained by counting elements of the stabilizer with integer prefactors. 
Let us point to the most general form of these inequalities, the so-called {\em shadow inequality}:
\begin{theorem}[Rains~\cite{681316, 1751-8121-51-17-175301}]\label{thm:shadow_ineq}
 Let \(M\) and \(N\) be non-negative operators on \(n\) qubits. For any subset \(T \subseteq \{1\dots n\}\) it holds that
\begin{equation}\label{eq:shadow_ineq}
 \sum_{S\subseteq \{1\dots n\}} (-1)^{|S \cap T|} \tr[ \tr_{S^c}(M) \tr_{S^c}(N)] \geq 0\,,
\end{equation}
where the sum is taken over all subsets \(S\) in \(\{1\dots n\}\).
 \end{theorem}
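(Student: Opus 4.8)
The plan is to prove the inequality by reducing it to a manifestly nonnegative quantity, exploiting the fact that the sign $(-1)^{|S\cap T|}$ encodes exactly the action of a spin-flip on the parties in $T$. First I would rewrite the left-hand side by inserting the Bloch (Pauli) expansions of $M$ and $N$. Writing $M = \sum_{P} m_P P$ and $N = \sum_{Q} n_Q Q$ with real coefficients $m_P = 2^{-n}\tr(PM)$ and similarly for $N$, the quantity $\tr[\tr_{S^c}(M)\tr_{S^c}(N)]$ picks out, by orthogonality of the Pauli basis, only those Pauli terms whose support lies inside $S$. Concretely, $\tr[\tr_{S^c}(M)\tr_{S^c}(N)] = 2^{|S|}\sum_{\supp(P)\subseteq S} m_P n_P$, since a Pauli operator survives the partial trace over $S^c$ precisely when it acts trivially outside $S$, and the dimensional factor comes from the normalization $\tr(P^2)=2^{|S|}$ on the surviving $|S|$ qubits.

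Next I would interchange the order of summation, summing over Pauli operators $P$ on the outside and over subsets $S$ on the inside. For a fixed $P$, the condition $\supp(P)\subseteq S$ means $S$ ranges over all subsets containing $\supp(P)$, and I can write $S = \supp(P)\cup R$ with $R \subseteq \supp(P)^c$. The key combinatorial step is to evaluate $\sum_{S\supseteq \supp(P)} (-1)^{|S\cap T|} 2^{|S|}$. The parity sign factorizes over the parties: splitting $T$ into the part inside $\supp(P)$ and the part outside, the sum over the free coordinates $R$ produces a product of independent two-term sums of the form $(1 + (-1)^{[\,i\in T\,]}\,2)$ for each $i \in \supp(P)^c$. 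On any party $i \in T \setminus \supp(P)$ this factor is $1 - 2 = -1$, and on any party $i \notin T$ (and outside $\supp(P)$) the factor is $1 + 2 = 3$. This collapses the subset sum into a clean sign-weighted product, isolating a factor $(-1)^{|\supp(P)\cap T|}$ times a positive combinatorial constant, so that the whole expression becomes proportional to $\sum_P (-1)^{|\supp(P)\cap T|} m_P n_P$ times nonnegative weights.

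The final move is to recognize this sign-weighted sum as an inner product between $M$ (or $N$) and a spin-flipped operator. Conjugating by $Y^{\ot|T|}$ on the parties in $T$ flips the sign of every Pauli $P$ according to the parity of how many $X,Y,Z$ factors of $P$ land inside $T$ — this is exactly the universal state inversion restricted to $T$, generalizing Eq.~\eqref{eq:state_inv}. Thus the remaining sum equals $\tr(M\,\widetilde N^{\,T})$, where $\widetilde N^{\,T}$ denotes $N$ with the spin-flip applied on $T$. Since $M \geq 0$ and $\widetilde N^{\,T} = (Y^{\ot T})\,N^{T}\,(Y^{\ot T})$ — a unitary conjugate of the transpose of a nonnegative operator, hence itself nonnegative — the trace of their product is nonnegative, which is the claim. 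The main obstacle I anticipate is the combinatorial bookkeeping in the second step: one must carefully track how the partial-trace normalization factors $2^{|S|}$ combine with the parity signs across the three classes of parties (inside $\supp(P)\cap T$, inside $\supp(P)\setminus T$, and outside $\supp(P)$) so that everything reassembles precisely into the state-inversion inner product without spurious factors. Once that identity is pinned down, the positivity conclusion is immediate from $\tr(AB)\geq 0$ for nonnegative Hermitian $A,B$.
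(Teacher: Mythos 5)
The paper itself does not prove this theorem --- it is imported from Rains --- so your proposal can only be judged on its own merits. The skeleton (Bloch expansion, exchange of summation, per-site factorization of the sign) is the right one, but two steps break down. First, the dimensional bookkeeping is wrong in a way that matters: with $m_P = 2^{-n}\tr(PM)$ one has $\tr_{S^c}(P) = 2^{|S^c|}\,P|_S$ whenever $\supp(P)\subseteq S$, so
\begin{equation*}
\tr[\tr_{S^c}(M)\,\tr_{S^c}(N)] \;=\; 2^{\,2n-|S|}\!\!\sum_{\supp(P)\subseteq S}\! m_P n_P\,,
\end{equation*}
not $2^{|S|}\sum m_P n_P$ (test $S=\emptyset$: the left side is $\tr M\,\tr N = 4^n m_I n_I$). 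With your factor the per-site sums over the free coordinates are $1-2=-1$ on $T\setminus\supp(P)$ and $1+2=3$ elsewhere, so the product you call ``a positive combinatorial constant'' actually carries the sign $(-1)^{|T\setminus\supp(P)|}$; combined with $(-1)^{|\supp(P)\cap T|}$ this yields a global $(-1)^{|T|}$ rather than the spin-flip sign pattern, and the argument collapses by your own arithmetic. With the correct factor the per-site sums are $1\pm\tfrac12>0$ and the sign does come out as $(-1)^{|\supp(P)\cap T|}$, but the accompanying positive weights are $3^{\,n-|T|-|\supp(P)\cap T^c|}$ and \emph{depend on $P$}, so the expression is not proportional to $\sum_P(-1)^{|\supp(P)\cap T|}m_P n_P$ and cannot be identified with $\tr(M\widetilde N^{\,T})$.

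Second, even granting that unweighted sum, the closing positivity argument fails. The map that flips the sign of a Pauli word according to $|\supp(P)\cap T|$ is $N\mapsto Y^{\otimes T} N^{\Gamma_T} Y^{\otimes T}$ with $\Gamma_T$ the \emph{partial} transpose on $T$: conjugation by $Y$ alone leaves $Y$ invariant and flips only $X,Z$, so the transpose is needed to flip $Y$, and it must be restricted to $T$ or it also flips the $Y$'s outside $T$. Partial transposition does not preserve positivity, so $\widetilde N^{\,T}\geq 0$ is unjustified for proper subsets $T$; indeed $\tr(M\widetilde N^{\,T})$ can be negative --- that is precisely the PPT entanglement criterion. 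The repair, which is essentially Rains' argument, is to write $\tr[\tr_{S^c}(M)\tr_{S^c}(N)] = \tr[(M\otimes N)F_S]$ with $F_S$ the swap of the $S$-subsystems between the two copies, so that
\begin{equation*}
\sum_{S}(-1)^{|S\cap T|}F_S \;=\; \prod_{i\in T}(\one - F_i)\prod_{i\notin T}(\one + F_i)\;=\;2^n \bigotimes_{i\in T}\Pi^-_i \otimes \bigotimes_{i\notin T}\Pi^+_i \;\geq\; 0\,,
\end{equation*}
whence the claim follows from $M\otimes N\geq 0$. The $P$-dependent factors $3^{-|\supp(P)\cap T^c|}$ are exactly the Bloch coefficients of the symmetric projectors $\Pi^+_i$ on $T^c$, which is where the ``spin flip on $T$, identity on $T^c$'' picture goes wrong; your approach becomes correct only in the special case $T=\{1,\dots,n\}$, where it reduces to $\tr(\r\tilde\r)\geq 0$.
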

For \(M=N=\r\), the shadow inequality represents consistency conditions for quantum states in terms of purities of reductions, and for \(T = \{1\dots n\}\) it simply corresponds to \(\tr(\r\tilde\r)\geq 0\).
By symmetrizing the shadow inequality over all subsets \(T^c\) of some fixed size $0 \leq j \leq n$
one obtains the following constraints on the weight distribution of \(n\)-qubit states~\footnote{By convention, the shadow inequality is summed over the complement \(T^c\) of \(T\), such that \(S_0(\r) = \tr(\r \tilde \r)\).}~\cite{681316, 681315, 1751-8121-51-17-175301},
\begin{equation}\label{eq:shadow_Krawtchouk}
  S_j(\r) = \sum_{0\leq k \leq n} (-1)^k K_j(k,n) A_k(\r)  \geq 0 \,.
\end{equation}
The Krawtchouk polynomial above is given by
 \begin{equation}
  K_j(k,n) = \sum_{0\leq \alpha \leq j} (-1)^\alpha  3^{j-\alpha}  \binom{n-k}{j-\alpha} \binom{k}{\alpha} \,.
 \end{equation} 
Naturally, Theorem~\ref{thm:shadow_ineq} must also hold for all reductions of a joint state \(\r\), 
these being quantum states themselves.
Demanding this condition for all reductions \(\r_S\) of a fixed size \(m\), we obtain following proposition for 
the reduced weight distribution:
\begin{proposition}\label{prop:shadow_on_reductions}
Let \(\r\) be a state of \(n\) qubits. For all \(1 \leq m \leq n\), the reduced weight distributions \(A_k^m(\r)\) must satisfy
\begin{equation}
  S_j^m(\r) = \sum_{0\leq k \leq m} (-1)^k K_j(k,m) A_k^m(\r) \geq 0 \,.
\end{equation}
\end{proposition}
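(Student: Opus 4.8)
The plan is to apply the symmetrized shadow inequality [Eq.~\eqref{eq:shadow_Krawtchouk}] not to the global state but separately to each of its $m$-body reductions, and then to average over all of them. This is the natural generalization of the proof of Proposition~\ref{prop:shadow_on_reductions_S0}, which is exactly the special case $j=0$ of the present claim (recall $K_0(k,m)=1$). The key observation is that every reduction $\r_S$ with $|S|=m$ is itself a non-negative operator on $m$ qubits, so Theorem~\ref{thm:shadow_ineq} and its consequence Eq.~\eqref{eq:shadow_Krawtchouk} apply verbatim to $\r_S$ with the number of parties set to $m$, giving for each fixed $S$
\begin{equation}
  S_j(\r_S) = \sum_{0 \leq k \leq m} (-1)^k K_j(k,m)\, A_k(\r_S) \geq 0 \,.
\end{equation}

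Next I would relate the weight distribution $A_k(\r_S)$ of the reduction to the support-restricted quantity $A_k^S(\r)$ of the global state. For any Pauli $P$ with $\supp(P)\subseteq S$ one has $\tr(P\r) = \tr(P \r_S)$, and the weight of $P$ is unaffected by the trivial action on $S^c$; hence the two distributions coincide termwise, $A_k(\r_S) = A_k^S(\r)$. Summing the per-reduction inequality over all $\binom{n}{m}$ subsystems of size $m$ and exchanging the order of summation then yields
\begin{equation}
  \sum_{|S|=m} S_j(\r_S) = \sum_{0 \leq k \leq m} (-1)^k K_j(k,m) \sum_{|S|=m} A_k^S(\r) \geq 0 \,.
\end{equation}
The inner sum is the same overcounting already used in Propositions~\ref{prop:cut_relation} and~\ref{prop:shadow_on_reductions_S0}: a Pauli of weight $k$ contributes to $A_k^S$ once for each $m$-set $S$ containing its support, of which there are $\binom{n-k}{m-k}=\binom{n-k}{n-m}$. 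Thus $\sum_{|S|=m} A_k^S(\r) = \binom{n-k}{n-m} A_k(\r) = A_k^m(\r)$ by Definition~\ref{def:red_weight_distr}, and substituting this back produces precisely $S_j^m(\r) = \sum_{0\leq k\leq m}(-1)^k K_j(k,m) A_k^m(\r)\geq 0$.

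I do not anticipate a genuine obstacle, as the argument is a clean averaging of an inequality that is already available for every single reduction. The only points demanding care are bookkeeping ones: ensuring the Krawtchouk polynomial carries the correct second argument $m$ (rather than $n$) once the inequality is localized to an $m$-qubit reduction, and verifying the combinatorial overcounting factor. Conceptually, the one thing worth stressing is that non-negativity is preserved termwise under the sum over $S$, so no cancellation between reductions can occur and the averaged inequality is genuinely implied by the individual ones; any overall positive normalization constant relating $S_0(\r_S)$ to $\tr(\r_S\tilde\r_S)$ is immaterial to the sign.
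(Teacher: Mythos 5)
Your proof is correct and follows essentially the same route as the paper's: apply the symmetrized shadow inequality Eq.~\eqref{eq:shadow_Krawtchouk} to each $m$-body reduction, identify $A_k(\r_S)=A_k^S(\r)$, sum over all $\binom{n}{m}$ subsets, and use the overcounting factor $\binom{n-k}{n-m}$ to recover $A_k^m(\r)$. In fact your bookkeeping is slightly more careful than the paper's, which writes the factor as $\binom{n-j}{n-m}$ (a typographical slip, since $j$ is the shadow index); your $\binom{n-k}{n-m}$ is the one consistent with Definition~\ref{def:red_weight_distr}.
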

\begin{proof}
Consider Eq.~\eqref{eq:shadow_Krawtchouk} for a \(m\)-body reduction of an \(n\)-qubit state. 
Summing over all marginals of size \(m\), one obtains
\begin{align}
 \sum_{\substack{S \subseteq \{1\dots n\} \\ \wt(S) = m}}  \sum_{0\leq k \leq m} (-1)^k K_j(k,m) A_k^S(\r_S) 
 &= \sum_{0\leq k \leq m} (-1)^k K_j(k,m) \binom{n-j}{n-m} A_k(\r) \nn\\
 &= \sum_{0\leq k \leq m} (-1)^k K_j(k,m) A_k^m(\r) \geq 0 \,.   
 \end{align} 
This ends the proof.
\end{proof}

Finally, let us note that constraints on weight distributions such as Eq.~\eqref{eq:shadow_ineq} can also be expressed in terms of {\em purities} or {\em linear entropies} of reductions, and vice versa. 
The linear entropy approximates the von Neumann entropy to its first order, and is defined as
\(  S_L(\r_S) = 2[1 - \tr(\r_S^2)]\). The quantity \(\tr(\r_S^2)\) is called the purity, measuring the pureness of a state.
As an example, let us demonstrate how the universal state inversion imposes constraints 
on the linear entropies of the two- and one-party reductions of a joint state.
\begin{corollary}\label{prop:one_two_body_linent}
Let \(\r\) be a multipartite quantum state, 
and denote by \(\r_i\) and \(\r_{ij}\) its one- and two-body reductions. 
The following inequality holds,
\begin{equation}\label{eq:one_two_body_linent}
 (n-1) \sum_i S_L(\r_i) - \sum_{i<j} S_L(\r_{ij}) \geq  0\,.
\end{equation}
\end{corollary}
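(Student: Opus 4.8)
The plan is to convert the linear entropies into purities, rewrite the purities in terms of the reduced weight distribution, and then recognize the whole left-hand side as a positive multiple of the \(m=2\) instance of Proposition~\ref{prop:shadow_on_reductions_S0}. The only two inputs I need are the defining identity \(S_L(\r_S)=2[1-\tr(\r_S^2)]\) and the observation that, summing over all marginals of a fixed size \(m\),
\[
  \sum_{|S|=m}\tr(\r_S^2)=2^{-m}\sum_{|S|=m}\sum_{j=0}^m A_j^S=2^{-m}\sum_{j=0}^m A_j^m(\r),
\]
where the overcounting factor \(\binom{n-j}{n-m}\) hidden in \(A_j^m\) is exactly the combinatorial factor already collected in the proof of Proposition~\ref{prop:cut_relation}.

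First I would evaluate the two marginal sums separately, using \(A_0=\tr(\r)=1\) throughout. For \(m=1\) the surviving reduced weights are \(A_0^1=n\) and \(A_1^1=A_1\), so that \(\sum_i S_L(\r_i)=n-A_1\). For \(m=2\) the relevant reduced weights are \(A_0^2=\binom{n}{2}\), \(A_1^2=(n-1)A_1\), and \(A_2^2=A_2\), which gives \(\sum_{i<j}S_L(\r_{ij})=2\binom{n}{2}-\tfrac12\big(\binom{n}{2}+(n-1)A_1+A_2\big)\).

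Next I would assemble the target combination. Substituting the two expressions and simplifying, the terms quadratic in \(n\) and those linear in \(A_1\) should cancel so that the left-hand side collapses to
\[
  (n-1)\sum_i S_L(\r_i)-\sum_{i<j}S_L(\r_{ij})=\tfrac12\Big(\binom{n}{2}-(n-1)A_1+A_2\Big)=\tfrac12\sum_{j=0}^2(-1)^j A_j^2(\r).
\]
By Proposition~\ref{prop:shadow_on_reductions_S0} with \(m=2\) the alternating sum on the right is non-negative, which establishes the claim.

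The computation carries no conceptual difficulty; the main thing to watch is the bookkeeping. The coefficient \((n-1)\) in the statement is precisely what is needed for the \(\binom{n}{2}\)-sized constant terms and the \(A_1\)-terms to cancel so cleanly, and a misnormalized \(A_j^m\) or a dropped factor of \(2^{-m}\) would destroy the collapse. I would therefore verify the overcounting factors \(\binom{n-j}{n-m}\) and the \(2^{-m}\) normalization carefully, and confirm that the residual expression is \emph{exactly} the \(m=2\) shadow constraint rather than merely bounded by it.
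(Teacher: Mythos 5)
Your computation is correct: $\sum_i S_L(\r_i)=n-A_1$, $\sum_{i<j}S_L(\r_{ij})=2\binom{n}{2}-\tfrac12\big(\binom{n}{2}+(n-1)A_1+A_2\big)$, and the target combination does collapse to $\tfrac12\big(\binom{n}{2}-(n-1)A_1+A_2\big)=\tfrac12\sum_{j=0}^2(-1)^jA_j^2(\r)\geq 0$, which is exactly the $m=2$ instance of Proposition~\ref{prop:shadow_on_reductions_S0}. However, this is not the route the paper takes. The paper never passes through the weight distribution here: it uses the inclusion--exclusion form of the universal state inversion, $\tilde\r_{ij}=\one-\r_i\ot\one_j-\one_i\ot\r_j+\r_{ij}$, so that $\tr[\r_{ij}\tilde\r_{ij}]=1-\tr(\r_i^2)-\tr(\r_j^2)+\tr(\r_{ij}^2)=\tfrac12\big[S_L(\r_i)+S_L(\r_j)-S_L(\r_{ij})\big]\geq 0$ holds for \emph{each pair individually}, and the corollary follows by summing over pairs. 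That argument is strictly stronger --- it establishes pairwise subadditivity of the linear entropy, of which the stated inequality is the symmetrized consequence --- and it is manifestly dimension-independent, matching the corollary's phrasing ``multipartite quantum state''; your route, as written with the $2^{-m}$ normalizations and Proposition~\ref{prop:shadow_on_reductions_S0} (stated for qubits), covers the qubit case and would need the $D$-level weight-distribution conventions to extend. What your route buys in exchange is that it makes fully explicit the identification the paper only asserts in the remark following Corollary~\ref{prop:one_two_three_body_linent}, namely that Corollary~\ref{prop:one_two_body_linent} ``simply corresponds to'' Proposition~\ref{prop:shadow_on_reductions_S0} for $m=2$ and $j=0$; you have in effect proven that remark. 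Both arguments ultimately rest on the same positivity, $\tr(\r_S\tilde\r_S)\geq 0$.
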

\begin{proof}
 As shown in Ref.~\cite{PhysRevA.72.022311}, the universal state inversion can also be written as
\begin{equation}
    \tilde\r = \sum_{S\subseteq \{1\dots n\}} (-1)^{|S|} \r_S \ot \one_{S^c}\,.
\end{equation} 
 Considering the state inversion on two-body marginals, one obtains
 \begin{align}
    \sum_{i<j} \tr[ \r_{ij} \tilde \r_{ij}] &= \sum_{i<j}\tr[\r_{ij} (\one - \r_i \ot \one_j - \one_i \ot \r_j + \r_{ij})] \nn\\
				&=  \sum_{i<j}  (1- \tr[\r_i^2] - \tr[\r_j^2]  + \tr[\r_{ij}^2]) \nn\\
				&= (n-1) \sum_i S_L(\r_i) -  \sum_{i<j} S_L(\r_{ij}) \geq 0\,.
\end{align} 
This ends the proof.
\end{proof}

Let us derive another relation arising from the shadow inequality [Eq.~\eqref{eq:shadow_ineq}] that involves three-body reductions.
\begin{corollary}\label{prop:one_two_three_body_linent}
Let \(\r\) be a multipartite quantum state. 
Denote by \(\r_i\), \(\r_{ij}\), and \(\r_{ijk}\) its one-, two-, and three-body reductions. 
The following inequality holds,
\begin{equation}\label{eq:one_two_three_body_linent}
  \sum_i S_L(\r_{i}) + \sum_{i<j} S_L(\r_{ij}) - \sum_{i<j<k} S_L(\r_{ijk})\geq 0\,.
\end{equation}
\end{corollary}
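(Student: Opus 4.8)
The plan is to follow the proof of Corollary~\ref{prop:one_two_body_linent} one order higher, replacing the universal state inversion by the full shadow inequality of Theorem~\ref{thm:shadow_ineq} applied to the three-body reductions. The reason the naive analogue does not suffice is that evaluating \(\tr(\r_{ijk}\tilde\r_{ijk})\ge 0\) on a triple produces the sign pattern \((+,-,+)\) on the one-, two-, and three-body purities, whereas the statement demands the pattern \((+,+,-)\). The freedom to choose the subset \(T\) in Theorem~\ref{thm:shadow_ineq} is exactly what supplies the missing sign: taking \(M=N=\r_{ijk}\) with a two-element \(T\subseteq\{i,j,k\}\) (equivalently, invoking Proposition~\ref{prop:shadow_on_reductions} at \(m=3\) with the index \(j=1\)) flips the three-body contribution to a minus sign.

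First I would fix a triple \(\{i,j,k\}\) and write the shadow inequality for the three-qubit state \(\r_{ijk}\). Using \(\tr[\tr_{S^c}(\r_{ijk})\tr_{S^c}(\r_{ijk})]=\tr(\r_S^2)\), every summand becomes the purity of a sub-reduction \(\r_S\) with \(S\subseteq\{i,j,k\}\), so the inequality is a signed sum of \(1,\tr(\r_i^2),\dots,\tr(\r_{ijk}^2)\). Averaging over the subsets \(T\) of the chosen size collapses this into the symmetric purity sums \(\sum_a\tr(\r_a^2)\), \(\sum_{a<b}\tr(\r_{ab}^2)\), and \(\tr(\r_{ijk}^2)\). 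Substituting \(\tr(\r_S^2)=1-\tfrac12 S_L(\r_S)\), the pure constants cancel --- as they must, since a pure product triple has all \(S_L=0\) and saturates every such relation (cf.\ Proposition~\ref{prop:weight_dist_prod}) --- leaving a single inequality in the linear entropies of the triple with the required \((+,+,-)\) sign pattern.

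The final step is to sum this per-triple inequality over all \(\binom{n}{3}\) triples. Each party lies in \(\binom{n-1}{2}\) triples and each pair in \((n-2)\) triples, so the per-triple contributions reassemble into the global sums \(\sum_i S_L(\r_i)\), \(\sum_{i<j}S_L(\r_{ij})\), and \(\sum_{i<j<k}S_L(\r_{ijk})\) with the multiplicities fixed by this counting. Being a sum of non-negative terms, the left-hand side is non-negative, which yields the claim. This is precisely the mechanism behind the two-body relation, now carried to three-body reductions.

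The main obstacle I anticipate is the sign-and-multiplicity bookkeeping. One has to commit to the subset \(T\) (equivalently the Krawtchouk index) that produces the \((+,+,-)\) pattern rather than the \((+,-,+)\) pattern of the plain state inversion, and then track the factors \(\binom{n-1}{2}\) and \(n-2\) correctly when passing from the symmetric per-triple sums to the global sums. A good safeguard at each step is to evaluate the relation on two extreme states: the pure product state, where all \(S_L\) vanish and the relation should hold with equality, and the GHZ state, whose proper reductions all have \(S_L=1\); these two checks pin down the coefficients and expose any stray sign before it propagates.
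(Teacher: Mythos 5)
Your approach coincides with the paper's own proof: the paper applies the shadow inequality (Theorem~\ref{thm:shadow_ineq}) with \(M=N=\r_{ABC}\) and the two-element subset \(T=\{A,B\}\) to a single three-body reduction, rewrites the resulting signed sum of purities in terms of linear entropies (the constants cancel exactly as you predict), and then sums over all triples. Your identification of the \((+,+,-)\) sign pattern as coming from \(|T|=2\), i.e.\ the \(m=3\), \(j=1\) shadow coefficient rather than the plain state inversion, is precisely the mechanism the paper uses.

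There is, however, a genuine gap in the last step, and your own bookkeeping already contains the seed of it. Symmetrizing the per-triple inequality over the three choices of the distinguished party and summing over all \(\binom{n}{3}\) triples yields
\begin{equation}
\binom{n-1}{2}\sum_i S_L(\r_i) \;+\; (n-2)\sum_{i<j}S_L(\r_{ij}) \;-\; 3\sum_{i<j<k}S_L(\r_{ijk}) \;\geq\; 0\,,
\end{equation}
with exactly the multiplicities \(\binom{n-1}{2}\) and \((n-2)\) you list (and a factor \(3\) on the triple term, since each triple is counted once per role assignment). This is \emph{not} the unit-coefficient inequality~\eqref{eq:one_two_three_body_linent}, and for \(n>3\) the latter is not a positive combination of the per-triple inequalities. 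Your proposed GHZ sanity check exposes the discrepancy: for the \(n\)-qubit GHZ state every proper reduction has \(S_L=1\), so the left-hand side of~\eqref{eq:one_two_three_body_linent} equals \(n+\binom{n}{2}-\binom{n}{3}\), which is negative for \(n\geq 7\), whereas the weighted version above evaluates to \(n(n-1)(n-2)/2\geq 0\). The paper's proof makes the same unexamined leap from the per-triple inequality to the unit-coefficient statement (contrast Corollary~\ref{prop:one_two_body_linent}, where the analogous multiplicity \((n-1)\) is correctly retained). So your plan is sound and reproduces the paper's argument, but carried out carefully it proves the weighted inequality, not the corollary as printed; you should either restate the claim with the coefficients \(\binom{n-1}{2}\), \((n-2)\), \(3\), or find an entirely different derivation for the unit-coefficient form (which the GHZ example shows cannot exist for general \(n\)).
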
 
\begin{proof}
 Consider the shadow inequality [Eq.~\eqref{eq:shadow_ineq}] on a single three-body reduction \(\r_{ABC}\). Choosing \(T=\{AB\}\) and \(M=N=\r_{ABC}\), one obtains
 \begin{equation}
  1 - \tr(\r_A^2) - \tr(\r_B^2) + \tr(\r_C^2) + \tr(\r_{AB}^2) - \tr(\r_{AC}^2) - \tr(\r_{BC}^2) + \tr(\r_{ABC}^2) \geq 0\,.
 \end{equation}
 This can be rewritten in terms of linear entropies
 \begin{equation}
  S_L(\r_A) + S_L(\r_B)  - S_L(\r_C) - S_L(\r_{AB}) + S_L(\r_{AC}) + S_L(\r_{BC}) - S_L(\r_{ABC}) \geq 0\,.
 \end{equation}
 Summing this inequality over all three-body reductions of a multipartite quantum state~\(\r\) yields the claim.
\end{proof}

Note that Corollary~\ref{prop:one_two_body_linent} simply corresponds to Propositon~\ref{prop:shadow_on_reductions_S0}
for \(m=2\) and \(j=0\), expressed in terms of linear entropies, 
and Corollary~\ref{prop:one_two_three_body_linent} corresponds to the case of \(m=3\) and \(j=1\). 
Further relations can be obtained for other values of  \(m\) and \(j\), and in turn, 
these give consistency equations on decks of quantum marginals. 
Lastly, let us point out that the shadow inequality [Eq.~\eqref{eq:shadow_ineq}] also holds as an operator inequality for all 
multipartite systems of finite local dimensions~\cite{Huber_Thesis, Eltschka2018}: for any \(T\subseteq \{1\dots n\}\), the following expression 
is positive semidefinite,
\begin{equation}
 \sum_{S\subseteq \{1\dots n\}} (-1)^{|S\cap T|} \r_S \ot \one_{S^c} \geq 0\,.
\end{equation} 
With \(T = \{1\dots n\}\) the expression reduces to that of the universal state inversion [Eq.~\eqref{eq:state_inv}].


\section{Quantum deck legitimacy and state reconstruction}\label{sec:detect_illegitimate_decks}

In the spirit of Kelly's condition [Proposition~\ref{Kelly}], 
constraints imposed upon the weight distributions of quantum states can help to detect the illegitimacy of quantum decks: 
in case the (reduced) weight distribution of a putative deck does not meet the constraints given by Proposition~\ref{prop:cut_relation}, the deck can not be legitimate~\footnote{
If each card is given as a density matrix, Propositions~\ref{prop:shadow_on_reductions_S0} and~\ref{prop:shadow_on_reductions} are already fulfilled. 
Using a linear program, it can be checked however if additional weights \(A_{m+1}, \dots, A_n \geq 0\) can be found such that Eq.~\eqref{eq:shadow_Krawtchouk} holds. If not, the deck has to be illegitimate.}.
To detect the incompatibility of a deck with a joint {\em graph} state, Proposition~\ref{prop:type_I_II} can additionally be tested.

Let us first show how the weights \(A_1, \dots, A_m\) of a putative joint state can be obtained from 
having access to all \(m\)-body marginals. Given a complete quantum \(m\)-deck \(\DD_\r = \{ \r_S \,|\,\, |S| = m\}\), 
we calculate
  \begin{equation}
	\sum_{\r_S \in \DD} \sum_{\substack{P \in \P \\ \wt(P) = j}} \tr(\r_S P) \tr(\r_S P^\dag) 
	\,= \! \sum_{S,\, |S| = m}\! A_j^S 
	\,=\, \sum_{j=0}^m \binom{n-j}{n-m} A_j  \,=\, \sum_{j=0}^m A_j^m \,.
\end{equation}
From \(A_j^m\), the weights \(A_j\) can be obtained for \(0\leq j \leq m\) from Eq.~\eqref{eq:red_weight_distr}.
Note that for decks of putative joint graph states, 
\(A_j^m\) simply equals the number of stabilizer 
elements of weight \(j\) appearing in the quantum \(m\)-deck.

We now provide some examples of detecting illegitimacy using the cut-relations, 
applicable to detect the incompatibility with joint states that are pure.

\begin{example}
Consider the case of a pure three qubit state. Setting \(a=1\) in Proposition~\ref{prop:cut_relation} yields the condition \(A_2 = 3\). 
From the normalization of the state, \(\tr(\r) = 1\), it follows that \(A_1 + A_3 = 4\). 
Thus it is not possible to join three Bell states together, as each one has the weights \(A = [1,0,3]\) already.
\end{example}

\begin{example}
Let us consider a more elaborate example, the ring-cluster state of five qubits which is depicted in Fig.~\ref{fig:ulam_example_reductions}.
Those of its three-body marginals that can be obtained by tracing out nearest neighbors are an equal mixture of the four graph states that are shown in Fig.~\ref{fig:ulam_example_reductions}, where the circles denote local \(Z\)-gates.
Modifying the reductions to be the equal mixture of the states shown in the bottom row, it can be seen that no compatible joint state exists.
This follows from their corresponding weight distribution: the ring-cluster state has \(\binom{5}{3} = 10\) reductions on three qubits 
with \(A=[1,0,0,1]\). This is consistent with the cut-relations of Proposition~\ref{prop:cut_relation}, which read
\begin{align}\label{eq:five_qubit_cut_relation}
-  2 A_1  + A_2  +  A_3 		&= 10  \,\\
-  4 A_1  +  3 A_2  +  2A_3  +  A_4 	&= 35  \,.
\end{align}
Slightly modifying some reductions to be an equal mixture of the four other states that are depicted in the lower row
of Fig.~\ref{fig:ulam_example_reductions}, we obtain an illegitimate deck: these reductions have the weight distribution
\(A=[1, 0, 3/8, 11/8]\), and together with the rest of the deck, they do not satisfy Eq.~\eqref{eq:five_qubit_cut_relation}. 
Thus no compatible pure joint state on five qubits exists.
\end{example}

\begin{example}
Let us ask for what value \(p\in [0,1]\) a pure state \(\ket{\psi}\) on ten qubits could possibly exist, whose all reductions on six qubits equal 
\begin{equation}\label{eq:GHZ_6}
	(1-p) \frac{\one}{2^6}+ p \dyad{{\GHZ}_6}\,.
\end{equation}
Above, the Greenberger-Horner-Zeilinger state on six qubits is defined as 
\(\ket{{\GHZ}_6} = (\ket{000000} + \ket{111111} / \sqrt{2}\).
Its weights are \(A = [1,0,15,0,15,0,33]\). From it, we can obtain parts of the weight distribution of the putative joint state, 
namely
\begin{equation}
 A_{j\leq 6}(\ket{\psi}) = \binom{10}{j} \binom{6}{j}^{-1} A_j(\ket{{\GHZ}_6}) \,.
\end{equation}
Thus the putative pure joint state must have \(A = [1,0, 45p, 0, 210p, 0, 6930p, \dots]\).
Let us now see what value~\(p\) should have to satisfy Proposition~\ref{prop:cut_relation}.
The cut-relation that solely involves the weights up to~\(A_6\) requires that
\begin{equation}
   -  210 A_1  -  42 A_2  +  7 A_3  +  11 A_4  +  5 A_5  +  A_6 = 630 \,. 
\end{equation}
This can only be fulfilled if \(p = 3/35\). Linear programming however shows that no compatible weights \(A_7, \dots, A_{10}\geq 0\) satisfying Eq.~\eqref{eq:shadow_Krawtchouk} can be found. 
Thus a pure ten-qubit state having marginals as specified in Eq.~\eqref{eq:GHZ_6} can not exist.
\end{example}

\begin{figure}[tbp]\label{fig:ulam_example_reductions}
\centering
      \includegraphics[height = 7em]{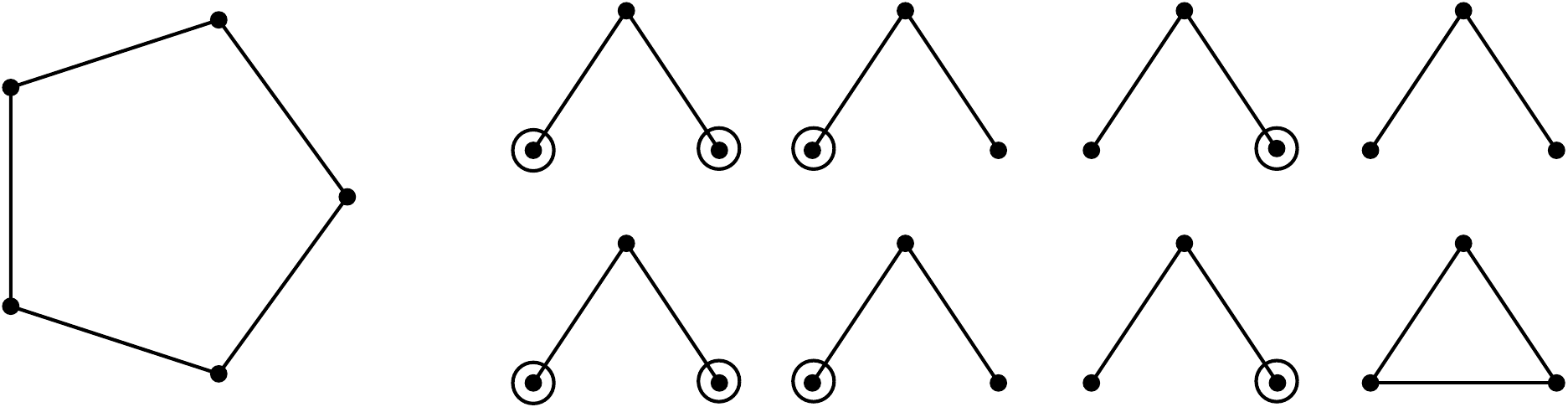}
      \caption{Left: the ring-cluster state on five qubits. 
	      Right, top row:    the three-qubit reductions of the five qubit ring-cluster state that are obtained 
	      by tracing out nearest neighbors are the equal mixture of these graph states.
	      Right, bottom row: modifying some reductions to be the equal mixture of the graph states shown in the bottom row,
	      no compatible joint state on five qubits exists.}
\end{figure}

Let us stress that in above examples, one does not require to know the particle labels.
Nonetheless it is already possible to detect some illegitimate decks when provided by a deck whose cards are of size \(\nhf+1\) only.

However, even when given the complete weight distribution \(A_0, \dots, A_n\) 
it is not always possible to decide whether or not it can indeed be realized by a quantum state: 
the constraints derived in the previous sections are necessary but not sufficient criteria for the existence of a realization.
It is not hard to find weight distributions that satisfy all known constraints,
but for which no corresponding quantum state can exist.
As an example, consider a hypothetical pure state of seven qubits
that shows maximal entanglement across every bipartition, therefore having all three-body marginals maximally mixed~\footnote{
This a so-called absolutely maximally entangled state, having the code parameters \(((7,1,4))_2\).}. 
Its weight distribution reads \( A = [1,   0,   0,   0,  35,  42,  28,  22]\)~\cite{PhysRevA.69.052330}.
While it was known by exhaustive search that the distribution cannot be realized by a graph state (in other words, cannot be graphical), 
it was only recently shown that no pure state with such property can exist at all~\cite{PhysRevLett.118.200502}.

Further cases of weight distributions are known whose realizations are still unresolved.
As an interesting example, the existence of a graph state on \(24\) qubits, 
having all \(9\)-body reductions maximally mixed, is a long-standing open problem~\footnote{This state is equivalent to a self-dual additive code over \(\operatorname{GF}_4\), and corresponds to a quantum code having the parameters \([[24,0,10]]_2\). 
See also Research Problem \(13.3.7\) in Ref.~\cite{Nebe2006}
and the code tables of Ref.~\cite{Grassl:codetables}.}.
Putative weights for such a state of type \(II\), having even weights only, are~\footnote{This weight distribution can also be found in 
the On-Line Encyclopedia of Integer Sequences, see \url{http://oeis.org/A030331}.}
\begin{align}
&[A_{10} , A_{12}, A_{14}, \dots A_{24}] = \nn\\
 &[   18216,   156492,  1147608,  3736557,  6248088,  4399164,  1038312,
    32778] \,.
\end{align}

It is worth to note that graph states are not uniquely identified by their weight distributions; 
graph states inequivalent under LU-transformations and graph isomorphism can indeed have 
the same weight distribution. This is to be expected, as the weight distribution consists of polynomial invariants 
that are of degree two only~\cite{Aschauer2004, 817508}.
As an example, consider the two seven-qubit graph states that are depicted in Fig.~\ref{fig:LU_ineq_same_distr}.
These can be shown to be inequivalent under local unitaries and graph isomorphism, 
but they share the same weight distribution of
\( A = [1,   0,   0,   7,  21,  42,  42, 15]\)~\footnote{
These are the graphs No. \(42\) and \(43\) of Fig. \(5\) in 
Ref.~\cite{PhysRevA.69.062311}.}. 
We conclude that graph states are not uniquely identified by their weights.

\begin{figure}[tbp]\label{fig:LU_ineq_same_distr}
\centering
      \includegraphics[height = 6em]{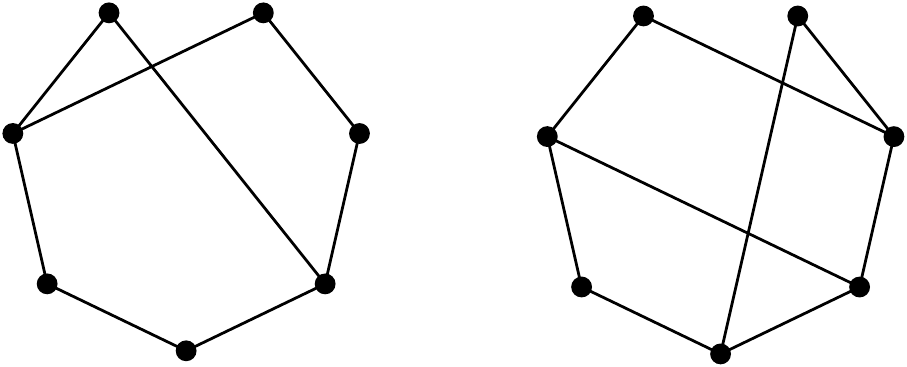}
      \caption{Two graph states on seven qubits that share the same 
      weight distribution, 
      but which can be shown to be inequivalent under local unitaries and graph isomorphism.
      These correspond to graphs No. \(42\) and \(43\) of Fig. \(5\) in Ref.~\cite{PhysRevA.69.062311}.}
\end{figure}

Let us end by providing an answer to the analogue of Ulam's reconstruction problem for general quantum states.
Can every joint state uniquely be reconstructed if all its \((n-1)\)-body marginals are known? The answer is no: 
as a counter example, consider a deck all whose cards equal the state \( ( \dyad{0}^{\otimes (n-1)} + \dyad{1}^{\otimes (n-1)})/2\). It is not hard to see that 
for all \(\theta\in [0, 2\pi]\), the Greenberger-Horne-Zeilinger type states 
\begin{equation}
 \ket{\GHZ_{\theta}} = \frac{1}{\sqrt{2}}(\ket{0}^{\otimes n} + e^{i\theta} \ket{1}^{\otimes n})
\end{equation}
are compatible. 

This leaves open the possibility that a reconstruction can nevertheless be achieved for almost all states, 
and the works of Refs.~\cite{PhysRevA.71.012324, PhysRevA.96.010102} indeed point towards that direction: 
considering generic pure states, the particle labels can be reconstructed by the comparison of one-body reductions. Ref.~\cite{PhysRevA.71.012324} has shown that generic pure states are uniquely determined by a certain set of~\(\nhf \) labeled marginals of size \(\lceil n/2 \rceil+1\). 
Alternatively, almost all pure states can (amongst pure states) uniquely be reconstructed from a carefully chosen set of three labeled marginals of size \((n-2)\) ~\cite{PhysRevA.96.010102}. Thus generic pure states are reconstructible from an incomplete deck: considering \((\lceil n/2 \rceil+1)\)-decks, almost all pure states have the reconstruction number \(\operatorname{rn}_\text{mix} (\ket{\psi})=\nhf\), while \(\operatorname{rn}_\text{pure}(\ket{\psi}) = 3\) when considering their \((n-2)\)-decks.

\section{Conclusion}\label{sect:conclusion}
We have introduced the analogue of the Ulam graph reconstruction problem to the case of quantum graph states.
In contrast to the classical case, a joint graph state can (up to local \(Z\)-gates)  
be reconstructed from a single card in the deck, and is thus reconstructible 
amongst graph states~\cite{1751-8121-48-9-095301}.
The result by Bollobás~\cite{JGT:JGT3190140102}, namely, that almost every
graph can uniquely be reconstructed from a carefully chosen set of three cards, 
has also an interesting counterpart in the quantum setting:
considering generic pure quantum states, particle labels can be restored by the comparison of one-body marginals, rendering almost all pure states reconstructible from three carefully chosen but initially unlabeled 
marginals of size \((n-2)\)~\cite{PhysRevA.96.010102}. It would be desirable to understand whether or not similar results apply to the special case of graph states.

As in the classical setting, the legitimate deck problem is of interest. 
While finding a complete solution to this problem is likely to be computationally hard, 
consistency relations on the weight distributions allow the detection of 
some but not all illegitimate quantum decks; in some cases this is already possible 
when the marginals are of size \(\nhf+1\). 
It would be interesting to see whether similar relations can be obtained for classical decks of graphs.

\bigskip

\noindent \emph{Acknowledgements.} We thank Danial Dervovic, Chris Godsil, Otfried Gühne, Joshua Lockhart, David W. Lyons, Christian Schilling, and Jens Siewert for fruitful discussions.
We acknowledge support by 
the Swiss National Science Foundation (Doc.Mobility grant 165024), 
the Fundación Cellex, 
the ERC (Consolidator Grant 683107/TempoQ),
the National Natural Science Foundation of China, 
The Royal Society, 
the UK Engineering and Physical Sciences Research Council (EPSRC), 
and the UK Defence Science and Technology Laboratory (Dstl). 
This is part of the collaboration between US DOD, UK MOD and UK EPSRC under the Multidisciplinary University Research Initiative.

\bibliographystyle{apsrev4-1}
\bibliography{ulam}
\end{document}